\newtheorem{theorem}{Theorem}  
\newtheorem{lemma}{Lemma}  
\newtheorem{proposition}{Proposition}  
\newtheorem{assumption}{Assumption}  
\newcommand{\xstar}[1]{x^*_{#1}(a)}
\newcommand{\xstartilde}[1]{\tilde{x}^*_{#1}}
\newlength{\mlLegendThickness}
\newlength{\mlLegendHeight}
\newcommand{\mlLineLegend}[1]{\mbox{\color{#1}
		\protect\rule[\mlLegendHeight]{3mm}{\mlLegendThickness}\hspace*{-1mm}
}}
\newcommand{\mlLineLegendDashed}[1]{\mbox{\color{#1}
\protect\rule[\mlLegendHeight]{1.5mm}{\mlLegendThickness}\hspace*{0mm}
\protect\rule[\mlLegendHeight]{1.5mm}{\mlLegendThickness}\hspace*{-1mm}
}}
\definecolor{dunkelblau}{rgb}{0.0, 0.2314, 0.6196}%
\definecolor{hellblau}{rgb}{0.000, 0.7451, 1.0000}%
\definecolor{rot}{rgb}{0.6980,0.1333,0.1333}%
\newcommand{\mycomment}[1]{}
\newcommand{\old}[1]{}
\newcommand{\reviewed}[1]{\textcolor{black}{#1}}
\newcommand{\new}[1]{\textcolor{black}{#1}}
\newcommand{\rev}[1]{\textcolor{black}{#1}}
\begin{document}

\begin{frontmatter}

\title{Stabilization of Age-Structured Competing Populations\thanksref{footnoteinfo}} %

\thanks[footnoteinfo]{
Corresponding author: Carina Veil.}

\author[stanford,isys]{Carina Veil}\ead{cveil@stanford.edu},    
\author[ucsd]{Miroslav Krstic}\ead{krstic@ucsd.edu}, 
\author[sdsu]{Patrick McNamee}\ead{pmcnamee5123@sdsu.edu},
\author[isys]{Oliver Sawodny}\ead{sawodny@isys.uni-stuttgart.de}  

\address[stanford]{Department of Mechanical Engineering, Stanford University, Stanford, CA 94305, USA}                                        
\address[ucsd]{Department of Mechanical and Aerospace Engineering, University of California, San Diego, La Jolla, CA 92093-0411, USA}             
\address[sdsu]{Department of Mechanical Engineering, San Diego State University, San Diego, CA 92182, USA}          
\address[isys]{Institute for System Dynamics, University of Stuttgart, 70563 Stuttgart, Germany} 
          
\begin{abstract}    
Age-structured models \rev{capture} the dynamic behavior of populations over time and result in nonlinear integro-partial differential equations (IPDEs). \rev{These processes arise in various fields such as biotechnology, economics, or demography. While coupled age-structured IPDEs modeling two or more interacting species occur naturally in epidemiology and ecology, they remain relatively underexplored.}
\rev{Prior work has primarily addressed stable and marginally stable dynamics. In constrast,} this work considers an exponentially unstable model of two competing predator populations, formally referred to in the literature as ``competition'' dynamics. 
If one were to apply an input that simultaneously harvests both predator species, one would have control over only the product of the densities of the species, not over their ratio. 
Therefore, it is necessary to design a control input that directly harvests only one of the two predator species, while indirectly influencing the other via a backstepping approach. 
The model is transformed into a system of two coupled ordinary differential equations (ODEs), of which only one is actuated, and two autonomous, exponentially stable integral delay equations (IDEs) which enter the ODEs as nonlinear disturbances. The ODEs are globally stabilized with backstepping and an estimate of the region of attraction of the asymptotically stabilized equilibrium of the full IPDE system is provided, under a positivity restriction on control. \rev{Additionally, the full IPDE system is also shown to be local exponential stable.}
Such generalizations \rev{of competition dynamics} open exciting possibilities for future research directions \rev{for systems with more than two species}.
\end{abstract}

\end{frontmatter}

\allowdisplaybreaks

\section{Introduction}

Age-structured population models capture the dynamics of renewable populations over time and serve to explain past population fluctuations or predict future growth. Being structured around age cohorts, these models lead to nonlinear integro-partial differential equations (IPDEs) \new{with positive states}, \new{capturing sub-population dynamics governed by age-dependent contact rates \cite{heesterbeek2005law} and  and shaped by birth and death rates varying with age and, potentially, time.}

\new{Such models have been applied across biology, biotechnology, and demography \cite{inaba2017age, martcheva2015introduction, brauer2012mathematical, rong2007mathematical, gyllenberg1983stability} to link population dynamics with socio-economic outcomes; for instance, forecasting workforce and healthcare demand \cite{freiberger2024optimization, tahvonen2009economics}, understanding the spread of behaviors in carceral systems \cite{ibrahim2022mathematical, sooknanan2023criminals}, or guiding equitable educational strategies under demographic shifts \cite{lutz2005toward}.}
\new{On a smaller scale, the dynamics of interacting populations are studied through chemostat models, where fresh nutrient solution is fed to a bioreactor at the same rate as it is extracted. To achieve the desired amount of biomass within the reactor, this rate acts as control input.}
From a control perspective, most models involving interacting populations, such as in protein synthesis \cite{li2008competition} or wastewater treatment \cite{andrews1974dynamic}, are formulated with ordinary differential equations (ODEs), but structured models are increasingly relevant in ecology and epidemiology \cite{holmes1994partial, nguyen2020analysis, smith2013chemostats}.
In these systems, actuation is inherently subtractive: species can only be removed, not added, through mechanisms such as dilution in chemostats, harvesting in ecosystems, or vaccination in epidemics.

\textbf{Feedback control related work.}
Research on infinite-dimensional population models has mainly focused on open-loop stability under parameter restrictions \cite{inaba1990threshold, liu2015global} or on optimal control formulations \cite{albi2021control, feichtinger2003optimality}. 
\rev{For the stabilization of age-structured systems, the work of Karafyllis and Krsti\'c
\cite{karafyllis2017stability} laid the foundation. Their study considered a \textbf{single-population} model without competition terms, which is inherently stable, and introduced fundamental Lyapunov-based methods that have since become standard tools for feedback design. Such single population models are, for example, used to control the dilution rate of the chemostat such that the biomass follows certain trajectories  \cite{schmidt2018yield, kurth2021tracking, kurth2023control}.
In contrast, \textbf{multi-population} IPDE models remain underexplored. Feedforward control of a stable system with two competing species was designed in \cite{kurth2022model}. Veil et al. \cite{veil2024stabilization} addressed a \textbf{two-population predator–prey} model with interaction terms that render the equilibrium marginally stable. There, both populations share a control input, corresponding to simultaneous harvesting.}

\begin{figure}
    \centering
   \includegraphics[width=0.9\columnwidth]{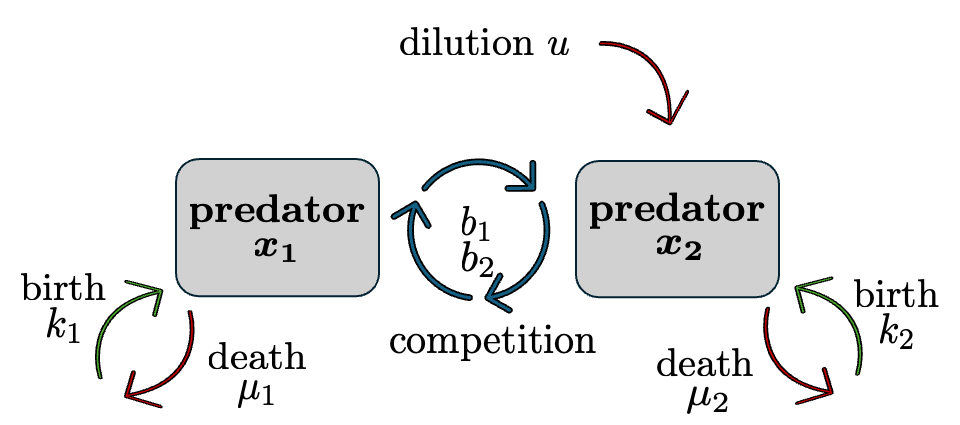}
    \caption{The two predators $x_1$ and $x_2$ compete via the terms $b_i$. Each species is affected by mortality $\mu_i$ and new population can only enter the system through birth $k_i$. The dilution input $u$ is selective and has a repressive effect on only one species.
    }
    \label{fig:ppmodel}
\end{figure}

\textbf{Contribution.}
\rev{The present work addresses a more challenging and qualitatively different scenario: a \textbf{two-population predator–predator} model with exponentially unstable open-loop dynamics \rev{(cf. Section~\ref{sec:control-design})} and only one non-extinction equilibrium. In this setting, shared actuation cannot stabilize the system  because it can only influence the product (but not the ratio) of the species densities. Hence, stabilization requires a discriminative actuation, where the control is applied selectively to one of the two predator populations. By directly actuating one species and employing a backstepping strategy to influence the other indirectly, both populations can be globally stabilized.}
\rev{Building on the transformation framework of \cite{karafyllis2017stability}, the proposed approach reformulates the IPDE system into two coupled ODEs and two autonomous but exponentially stable integral delay equations.
A control Lyapunov functional (CLF) and feedback law is constructed that need to differ fundamentally from those in \cite{karafyllis2017stability, veil2024stabilization}, both in structure and in the resulting region of attraction, due to the system’s exponential instability and asymmetric actuation. The resulting control design employs an unconventional backstepping technique.
The main similarity among all three works lies in the construction of the Lyapunov functionals \cite{karafyllis2017stability}. As such, this work introduces a benchmark-level problem for stabilization of multi-species age-structured systems under exponential instability and actuation asymmetry.}


\begin{table}[t]
\centering
\caption{\rev{Comparison of related previous work.}}
\label{tab:related-work}
\resizebox{\columnwidth}{!}{%
\begin{tabular}{lcccc}
\toprule
Work & Species & Interaction & Stability & Actuation \\
\midrule
\cite{karafyllis2017stability} & 1 & -  & Stable & -  \\
\cite{veil2024stabilization} & 2 & Predator--Prey & Marginally Stable & Shared \\ 
\midrule
Here & 2 & Predator--Predator & Exponentially Unstable & Selective \\
\bottomrule
\end{tabular}}
\end{table}

\new{\emph{Organization.} Section~\ref{sec:system-model} introduces the model, its equilibria and the system transformation. In Section~\ref{sec:control-design}, a simplified ODE system is considered to derive a backstepping control law which globally stabilizes the ODE system. In Section~\ref{sec:stability-results}, it is established that this control law also locally stabilizes the infinite dimensional system, along with an explicit estimate of the region of attraction.
Finally, simulations are presented in Section~\ref{sec:simulations}.
}

\rev{{This article is an extended journal version of a significantly shorter conference paper  \cite{veil2025stabilization}, with the following added value: the results for the open-loop instability in Section~\ref{sec:control-design}, an additional exponential stability result given as Part 4 of Theorem~\ref{thm:local-stability}, and an exponential stability result of the linearization for a controller using lumped measurements instead of exact population densities in Proposition~\ref{prop:local-exponential-stability}.}}

\section{Population Model}\label{sec:system-model}
The age-structured predator-predator model with initial conditions (IC) and boundary conditions (BC) is given by
\begin{subequations} \label{eq:sys}
    \begin{align}
	 x_1'(a,t) + \dot x_1(a,t)
	=& - x_1(a,t) \Big[ \mu_1(a) \notag \\
    & + \int_0^A b_1(\alpha) x_2(\alpha,t) d\alpha \Big] \label{eq:ipde1} \\
		 x_2'(a,t) + \dot x_2(a,t)
	=& - x_2(a,t)\Big[ \mu_2(a)\notag \\
    &  + \int_0^A b_2(\alpha) x_1(\alpha,t) d \alpha + u(t)\Big] \label{eq:ipde2} \\
    \text{IC}: \qquad\quad \ \ x_i(a, 0) =& \ x_{i,0}(a), \\
	\text{BC}: \qquad\quad \, \ x_i(0,t) =& \ \int_0^A k_i(a) x_i(a,t) d a  \label{eq:bc}
     \end{align} 
\end{subequations}
where, for $i,j \in \{1,2\}$, $i\neq j$, $x_i(a,t)>0$ is the population density, i.~e. the amount of organisms of a certain age $a \in [0,A]$ of the two interacting populations $x_1(a,t)$ and $x_2(a,t)$ with $\reviewed{(a,t) \in [0,A] \times \mathbb{R}_+}$, their derivatives $\dot{x}_i$ with respect to time and $x'_i$ with respect to age, and the constant maximum age $A>0$.
The interaction kernels $b_i(a):[0,A]\rightarrow\mathbb{R}_0^+$, the mortality rates $\mu_i(a):[0,A]\rightarrow\mathbb{R}_0^+$, the birth rates $k_i(a):[0,A]\rightarrow\mathbb{R}_0^+$ are functions with $\int_0^A \mu_i(a)d a > 0$, $\int_0^A b_i(a)d a > 0$, $\int_0^A k_i(a)d a >0$. The dilution rate $u(t):\mathbb{R}^+\rightarrow \mathbb{R}_0^+$, is an input affecting only one species. 

System~\eqref{eq:sys} describes predator-predator population dynamics where only one predator is harvested.
Importantly, \eqref{eq:sys} is unstable and its states are functions of $a\in[0,A]$ with values $x_i$ at time $t$, that belong to the function spaces $\mathcal{F}_i$, $i=1,2$, 
\begin{align}
    \mathcal{F}_i = \Big\{ \xi \in PC^1 ([0,A];(0,\infty)):
    \xi(0) =\int_0^A k_i(a) \xi(a) d a 
    \Big\}.
\end{align}
For any subset $\mathcal{R}\subseteq \mathbb{R}$ and for any $A>0$, $PC^1([0,A];\mathcal{R})$ denotes the class of all functions $f\in C^0([0,A];\mathcal{R})$ for which there exists a finite (or empty) set $B\subset(0,A)$ such that: \rev{(i) the derivative $x'(a)$ exists at every $a \in [0,A]\backslash B$ and is a continuous function on $[0,A] \backslash B$ and (ii) the limits $\lim_{a\to b^+} x'(a)$ and $\lim_{a\to b^-} x'(a)$ exists, are finite, and equal for all $b\in B$.}
\rev{For ease of notation, we let $x = [x_1,x_2]^\top$ and $\mathcal{F} = \mathcal{F}_1 \times \mathcal{F}_2$.}

\subsection{Equilibrium}
\new{To determine the equilibria of the population dynamics~\eqref{eq:sys}, the well-known Lotka-Sharpe condition is needed, which provides a mathematical criterion for the existence of population equilibria. The proof can be found in \cite{sharpe1911problem}.}

\begin{lemma}[Lotka-Sharpe condition \cite{sharpe1911problem}]
\label{lem:ls}
The equations
\begin{align}\label{eq:lotkasharpe}
    \int_0^A \tilde{k}_i(a) d a = 1, \ i=1,2
\end{align}
with
\begin{align}
    \tilde{k}_i(a) =  k_i(a) e^{-\int_0^a \left(\mu_i(s) + \zeta_i \right)d s}
\end{align}
have unique real-valued solutions 
$\zeta_1(k_1,\mu_1)$ and $\zeta_2(k_2,\mu_2)$ \rev{that are constant and} 
depend on the birth rates $k_i$ and mortality rates $\mu_i$.
\end{lemma}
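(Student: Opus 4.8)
The plan is to treat the two indices separately: the equation defining $\zeta_i$ involves only $k_i$ and $\mu_i$, so the two conditions in \eqref{eq:lotkasharpe} decouple and it suffices to handle a single scalar equation. Fix $i \in \{1,2\}$ and isolate the $\zeta$-dependence of the kernel. Since $\int_0^a(\mu_i(s)+\zeta_i)\,ds = \int_0^a \mu_i(s)\,ds + \zeta_i\,a$, I would write $\tilde{k}_i(a) = g_i(a)\,e^{-\zeta_i a}$, where $g_i(a) := k_i(a)\,e^{-\int_0^a \mu_i(s)\,ds} \ge 0$, and define the scalar map
\[
\Phi_i(\zeta) := \int_0^A g_i(a)\,e^{-\zeta a}\,da, \qquad \zeta \in \mathbb{R}.
\]
The assertion then reduces to showing that $\Phi_i(\zeta)=1$ has exactly one real solution.

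I would establish this through three elementary properties of $\Phi_i$. First, $\Phi_i \in C^1(\mathbb{R})$ with $\Phi_i'(\zeta) = -\int_0^A a\,g_i(a)\,e^{-\zeta a}\,da$; differentiation under the integral is justified on each compact $\zeta$-interval by dominating the integrand with an integrable majorant, using that $g_i$ is bounded on $[0,A]$. Second, $\Phi_i$ is strictly decreasing: the integrand of $\Phi_i'$ is nonnegative, and it is strictly positive on a set of positive measure because the standing hypothesis $\int_0^A k_i(a)\,da > 0$ forces $k_i$, and therefore $g_i$ (as $e^{-\int_0^a \mu_i\,ds} > 0$), to be positive on a subset of $(0,A]$ of positive measure, the point $a=0$ being negligible. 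Hence $\Phi_i'(\zeta)<0$ for all $\zeta$.

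The last ingredient is the range of $\Phi_i$. As $\zeta \to +\infty$, dominated convergence (with majorant $g_i$ for $\zeta \ge 0$) gives $\Phi_i(\zeta) \to 0$, since $e^{-\zeta a}\to 0$ for every $a>0$. As $\zeta \to -\infty$, monotone convergence gives $\Phi_i(\zeta) \to +\infty$, because $g_i(a)\,e^{-\zeta a} \to +\infty$ on the positive-measure set $\{a\in(0,A]:g_i(a)>0\}$. Being continuous, strictly decreasing, and surjective onto $(0,\infty)$, the map $\Phi_i$ attains the value $1$ at a unique point $\zeta_i \in \mathbb{R}$ by the intermediate value theorem; its dependence on $(k_i,\mu_i)$ is inherited through $g_i$. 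Carrying this out for $i=1$ and $i=2$ proves the lemma.

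The step I expect to be the crux is establishing strict monotonicity together with the nondegeneracy of the two limits, since all three rest on ruling out that the mass of $g_i$ concentrates at $a=0$. This is precisely where the positivity hypothesis $\int_0^A k_i(a)\,da>0$ is indispensable: it guarantees that $\{a\in(0,A]:g_i(a)>0\}$ has positive measure, without which $\Phi_i$ could be constant or identically zero and no root would be forced.
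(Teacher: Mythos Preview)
Your argument is correct and is precisely the classical monotonicity/IVT proof of the Lotka--Sharpe (Euler--Lotka) condition. The paper itself does not supply a proof of this lemma---it simply cites the original reference \cite{sharpe1911problem}---so there is nothing to compare against; your self-contained treatment fills in exactly what the paper defers to the literature.
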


\new{Using the Lotka-Sharpe condition in Lemma~\ref{lem:ls}, the equilibrium of~\eqref{eq:sys} is stated.}
\begin{proposition}[Equilibrium]
\label{prop:steady}
The equilibrium  state \\ $(\xstar{1},\xstar{2})$ of the population system (\ref{eq:sys}) is given by
\begin{align}
    x_i^* (a) &= x_i^*(0) e^{-{\int_0^a (\zeta_i + \mu_i(s)) d s}} =: x_i^*(0) \tilde x_i^*(a),\label{eq:ss_profiles}\\
    u^* &=  \zeta_2 - \lambda_1 
    \label{eq:u_star_constraint}
\end{align}
with unique parameters 
$\reviewed{\zeta_i}$
resulting from the Lotka-Sharpe condition of Lemma~\ref{lem:ls}, \reviewed{constant interaction terms}
\begin{align}
    \lambda_i(u^*) = \int_0^A b_j(a)x_i^*(a) \reviewed{da},\label{eq:def_lambda}
\end{align}
and the positive  concentrations of the newborns 
\begin{subequations}
    \begin{align}
    	x_1^*(0) &=  \frac{\zeta_2-u^*}{ \int_0^A b_2(a) \xstartilde{1} (a)d a} >0, \\
    	x_2^*(0) &= \frac{\zeta_1}{\int_0^A b_1(a)\xstartilde{2}(a)d a }  > 0\,.
    \end{align}
    \label{eq:ic_constraint}
\end{subequations}
\end{proposition}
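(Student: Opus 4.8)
The plan is to impose stationarity on \eqref{eq:sys}, exploit that the two coupling integrals become $a$-independent constants at an equilibrium so that each IPDE collapses to a scalar linear ODE in age, and then let the boundary condition \eqref{eq:bc} single out the admissible decay rate through Lemma~\ref{lem:ls}.

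First I would set $\dot x_i(a,t)=0$, $x_i(a,t)=x_i^*(a)$, and $u(t)=u^*$ in \eqref{eq:ipde1}--\eqref{eq:ipde2}. Because the integrals $\int_0^A b_1(\alpha)x_2^*(\alpha)\,d\alpha$ and $\int_0^A b_2(\alpha)x_1^*(\alpha)\,d\alpha$ are independent of $a$, I abbreviate them as constants and write the stationary equations as $(x_i^*)'(a)=-x_i^*(a)\,[\mu_i(a)+\sigma_i]$, with $\sigma_1=\int_0^A b_1(\alpha)x_2^*(\alpha)\,d\alpha$ and $\sigma_2=\int_0^A b_2(\alpha)x_1^*(\alpha)\,d\alpha+u^*$. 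Each is a scalar linear ODE in $a$; integrating with the factor $\exp\!\big(\int_0^a(\mu_i(s)+\sigma_i)\,ds\big)$ yields the exponential profile
\[
x_i^*(a)=x_i^*(0)\,e^{-\int_0^a(\sigma_i+\mu_i(s))\,ds},
\]
so the shape $\tilde x_i^*$ is fixed once $\sigma_i$ is known.

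Next I would substitute this profile into the boundary condition \eqref{eq:bc}. Since a non-extinction equilibrium has $x_i^*(0)>0$, dividing through by $x_i^*(0)$ removes the amplitude and leaves exactly $\int_0^A k_i(a)\,e^{-\int_0^a(\mu_i(s)+\sigma_i)\,ds}\,da=1$, i.e. the Lotka--Sharpe condition \eqref{eq:lotkasharpe} with $\sigma_i$ in the role of $\zeta_i$. By Lemma~\ref{lem:ls} this equation has a unique root, which forces $\sigma_i=\zeta_i$ and thereby establishes \eqref{eq:ss_profiles}. This step is what resolves the apparent self-reference, namely that $\sigma_i$ is an integral of the very profiles being solved for: the rate $\zeta_i$ is pinned down by $(k_i,\mu_i)$ alone, independently of the interspecies coupling, after which the cross-species amplitudes follow linearly.

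Finally, recalling the definition \eqref{eq:def_lambda}, the identity $\sigma_i=\zeta_i$ reads $\zeta_1=\lambda_2$ and $\zeta_2=\lambda_1+u^*$; the latter is precisely the input constraint \eqref{eq:u_star_constraint}. Writing $x_i^*(a)=x_i^*(0)\,\tilde x_i^*(a)$ and inserting it into these two scalar relations turns them into linear equations for the newborn amplitudes, giving $x_2^*(0)=\zeta_1/\!\int_0^A b_1(a)\tilde x_2^*(a)\,da$ and $x_1^*(0)=(\zeta_2-u^*)/\!\int_0^A b_2(a)\tilde x_1^*(a)\,da$, which are \eqref{eq:ic_constraint}. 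The denominators are strictly positive because each $b_i\ge 0$ has positive integral and $\tilde x_i^*>0$. I expect the only genuinely delicate point to be positivity of the amplitudes: it reduces to $\zeta_1>0$ and $u^*<\zeta_2$ (equivalently $\lambda_1>0$), so I would record these as the standing viability conditions under which the stated non-extinction equilibrium exists.
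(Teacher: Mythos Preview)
Your proposal is correct and follows essentially the same approach as the paper: impose stationarity, reduce each IPDE to a linear ODE in age with constant coupling terms, integrate to obtain the exponential profiles, invoke the boundary condition to recover the Lotka--Sharpe equation and hence identify $\sigma_i=\zeta_i$, and finally read off the newborn amplitudes from the definitions of $\lambda_i$. Your write-up is in fact a bit more explicit than the paper's---particularly in flagging the positivity requirements $\zeta_1>0$ and $u^*<\zeta_2$ as the standing viability conditions---but the logical route is identical.
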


\begin{pf}
\rev{Stationarity of (\ref{eq:sys})}
results in the constant dilution rate $u^*$ and interaction terms \reviewed{$\lambda_i$}, \reviewed{which are}
merged in the parameters \rev{$\zeta_1=\lambda_2$ and $\zeta_2=u^* +\lambda_1$},
\begin{subequations}\label{eq:ss}
\begin{align}
 0 &= -x_1^{*'} (a) - \xstar{1} \Big(\mu_1(a) + \rev{\zeta_1 }\Big),\label{eq:ss_ode1} \\
  0 &= -x_2^{*'} (a) - \xstar{2} \Big(\mu_2(a) + \rev{\zeta_2} \Big).\label{eq:ss_ode2}
\end{align}%
\end{subequations} %
The equilibrium profiles (\ref{eq:ss_profiles}) result from solving ODEs (\ref{eq:ss_ode1}), \eqref{eq:ss_ode2} for arbitrary initial conditions $ x_i^*(0)$. Inserting the solutions into the boundary conditions (\ref{eq:bc}) results in the Lotka-Sharpe condition and, \reviewed{consequently, in} unique real-valued parameters $\zeta_i$. 
Solving for the ICs restricts the equilibrium ICs $x_i^*(0)$ (\ref{eq:ic_constraint}) to ensure $u^*>0$.
\hfill$\Box$\end{pf}

\subsection{System Transformation}
\reviewed{In a next step, a system transformation introduced in \cite{karafyllis2017stability} is applied to (\ref{eq:sys}), in order to express the IPDE system terms of two coupled ODEs and two autonomous (but exponentially stable) IDEs. This serves as basis for designing a stabilizing feedback law.}
\begin{proposition}[System Transformation]
    Consider the mapping \reviewed{from \cite{karafyllis2017stability}}
    \begin{align}
    \begin{bmatrix}
    \eta_1(t)\\\eta_2(t)\\\psi_1(t-a)\\\psi_2(t-a)
    \end{bmatrix}=
    \begin{bmatrix}
        \mathrm{ln}\left(\Pi_1[x_{1}](t)\right)\\
        \mathrm{ln}\left(\Pi_2[x_{2}](t)\right)\\
        \frac{x_{1}(a,t)}{x^*_1(a)\Pi_1[x_{1}](t)}-1\\
        \frac{x_{2}(a,t)}{x^*_2(a)\Pi_2[x_{2}](t)}-1
    \end{bmatrix}, \label{eq:system-trafo}
\end{align}
defined with the functionals 
\begin{align}
    \Pi_i[x_i](t)
    &= \frac{\int_0^A \pi_{0,i}(a) x_i(a,t)  d a}
    {\int_0^A a k_i(a) x_i^*(a) d a}, \label{eq:pi-function}
\end{align}
where
$\pi_{0,i}(a) =\int_a^A k_i(s) e^{{\int_s^a \zeta_i + \mu_i(l) d l}} d s$
are the adjoint eigenfunctions to the zero eigenvalue of the adjoint differential operator \cite{schmidt2018yield} 
\begin{multline}
    \mathcal{D}_i^* \pi_{0,i}(a) = \frac{d\pi_{0,i}(a)}{da} - (\mu_i(a) + \zeta_i) \pi_{0,i}(a) \\ + k_i(a) \pi_{0,i}(0).
\end{multline}
The transformed variables satisfy the transformed system
\begin{subequations}
\label{eq:transformed-system}
    \begin{align}
        \dot{\eta}_1(t) &= \begin{multlined}[t][0.6\columnwidth]\zeta_1
        - e^{\eta_2(t)} \int_0^A b_1(a) x_2^*(a) \\ \cdot \left(1 + \psi_2(t-a)\right) d a
        \end{multlined} \label{eq:doteta1} \\
        \dot{\eta}_2(t) &= \begin{multlined}[t][0.65\columnwidth] \zeta_2  -u(t)
        - e^{\eta_1(t)} \int_0^A b_2(a) x_1^*(a) \\ \cdot \left(1 + \psi_1(t-a)\right) d a \end{multlined}  \label{eq:doteta2} \\
        \psi_i(t) 
        &= \int_0^A \tilde{k}_i(a) \psi_i(t-a) d a \label{eq:psi-def}\\
        \eta_i(0) &= \ln \left( \Pi [x_{i,0
        }]\right)=:\eta_{i,0}\\
        \psi_i(-a) &= \frac{x_{i,0}(a)}{x_i^*(a)\Pi [x_{i,0
        }]} - 1=: \psi_{i,0}(a)\label{eq:psi0}
    \end{align}
\end{subequations}
and, the unique solution of system (\ref{eq:sys}) is given by
\begin{align}
    x_i(a,t) = x_i^*(a)e^{\eta_i(t)}(1+\psi_i(t-a)).\label{eq:solution-x}
\end{align}
\end{proposition}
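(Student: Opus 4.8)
The plan is to treat the map \eqref{eq:system-trafo} as a bijection and verify both directions. The inverse \eqref{eq:solution-x} is immediate: since $\Pi_i[x_i](t)=e^{\eta_i(t)}$, solving the definition of $\psi_i$ for $x_i(a,t)$ gives $x_i(a,t)=x_i^*(a)e^{\eta_i(t)}(1+\psi_i(t-a))$. The whole argument then rests on three structural properties of the adjoint eigenfunction $\pi_{0,i}$, which I would establish first. Differentiating its definition yields the adjoint ODE $\pi_{0,i}'(a)=(\mu_i(a)+\zeta_i)\pi_{0,i}(a)-k_i(a)\pi_{0,i}(0)$; evaluating at the endpoints gives $\pi_{0,i}(A)=0$ and, via the Lotka--Sharpe identity \eqref{eq:lotkasharpe}, the normalization $\pi_{0,i}(0)=\int_0^A\tilde{k}_i(a)\,da=1$. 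These are exactly the facts that make the boundary terms collapse below.

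Next I would derive the $\eta$-dynamics. Writing $L_i(t)$ for the interaction integral in \eqref{eq:sys}, I differentiate $\Pi_i$ from \eqref{eq:pi-function} in time and insert $\dot x_i=-x_i'-x_i(\mu_i+L_i)$, with the extra $-x_i u$ present only for $i=2$. Integrating the $-\pi_{0,i}x_i'$ term by parts produces the boundary contribution $\pi_{0,i}(0)x_i(0,t)-\pi_{0,i}(A)x_i(A,t)=x_i(0,t)$; substituting the adjoint ODE and invoking the boundary condition \eqref{eq:bc} then cancels this newborn term against the $k_i$-contribution, leaving $\dot\Pi_i=(\zeta_i-L_i)\Pi_i$ for $i=1$ and $\dot\Pi_i=(\zeta_i-u-L_i)\Pi_i$ for $i=2$ (the constant denominator of $\Pi_i$ cancels throughout). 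Dividing by $\Pi_i$ gives $\dot\eta_i$, and rewriting $L_i$ with the inverse \eqref{eq:solution-x}, so that $x_j^*(a)$ factors out and $e^{\eta_j(t)}$ is pulled from the integral, reproduces \eqref{eq:doteta1}--\eqref{eq:doteta2}.

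The heart of the proof is showing that $\psi_i(t-a)$ genuinely depends only on $t-a$ and satisfies the IDE. For the first part I would set $w_i(a,t)=x_i(a,t)/\bigl(x_i^*(a)\Pi_i(t)\bigr)=1+\psi_i(t-a)$ and compute the characteristic derivative $(\partial_a+\partial_t)w_i$. Substituting the PDE \eqref{eq:sys}, the steady-state relation $x_i^{*\prime}/x_i^*=-(\mu_i+\zeta_i)$ implied by \eqref{eq:ss_profiles}, and the ratio $\dot\Pi_i/\Pi_i$ just obtained, every term cancels and $(\partial_a+\partial_t)w_i=0$; hence $w_i$ is constant along $t-a=\mathrm{const}$, so $\psi_i$ is well defined as a function of a single variable. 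For the IDE I would evaluate at $a=0$, use $\psi_i(t)=w_i(0,t)-1$ together with \eqref{eq:bc} and the inverse formula, and then apply $k_i(a)x_i^*(a)/x_i^*(0)=\tilde{k}_i(a)$ and $\int_0^A\tilde{k}_i(a)\,da=1$ to cancel the constant and arrive at \eqref{eq:psi-def}.

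Finally, the initial data \eqref{eq:psi0} follow by evaluating the transformation at $t=0$, where $t-a=-a$, and uniqueness of the solution to \eqref{eq:sys} follows because the transformation is invertible and the transformed system is well posed: the $\psi_i$ are fixed autonomously by their history through the IDE, after which the $\eta_i$ solve decoupled scalar ODEs. I expect the main obstacle to be the $\eta$-derivation, specifically arranging the integration by parts, the adjoint ODE, and the boundary condition so that both the newborn term and the $k_i$-term vanish simultaneously; the characteristic cancellation for $w_i$ is delicate but, once the $\dot\Pi_i$ identity is in hand, reduces to a direct substitution.
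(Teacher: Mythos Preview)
Your proposal is correct and follows the standard derivation of this transformation (adjoint eigenfunction identities, integration by parts for $\dot\Pi_i$, characteristic invariance of $w_i$, and evaluation at $a=0$ for the IDE). The paper's own proof consists of a single sentence deferring to Proposition~2 of \cite{kurth2022model} with $p_i(a)=0$; your write-up simply supplies the details that the paper omits, and the underlying argument is the same.
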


\begin{pf}
The proof follows the proof of Proposition~2 from \cite{kurth2022model} without intraspecific competition, i.e. $p_i(a)=0$.
\mycomment{
    Deriving the states $\eta_i$ with respect to time yields
    \begin{align}
        \dot{\eta}_i(t) = \frac{\dot{\Pi}_i[x_i](t)}{\Pi_i[x_i](t)} = \frac{\int_0^A \pi_{0,i}(a) \dot{x}_i(a, t) d a}{\int_0^A \pi_{0,i}(a) x_i(a, t) d a}, \label{eq:etadot-eigenfunction}
    \end{align}
 whose numerators can be rewritten as 
 \begin{align}
     \int_0^A \pi_{0,i}(a) \dot{x}_i(a, t) d a = \int_0^A \pi_{0,i}(a), x_i(a, t) d a \left(w_{1,i}(t) + \zeta_i \right)
 \end{align}
    by applying Green's Lemma. Inserting this result in (\ref{eq:etadot-eigenfunction}) yields
    \begin{align}
        \dot{\eta}_i(t) = \zeta_i + w_{1,i}(t) 
    \end{align}
    and allows for the determination of the states
    \begin{align}
        \eta_i(t) = \eta_{i,0} + \int_0^t \left(\zeta_i + w_{1,i}(\tau) \right)d \tau
    \end{align}
    by integration with respect to time and the IC $\eta_{i,0}$. Hence, the IDE (\ref{eq:ide-v2i}) can be expressed as 
    \begin{align}
        w_{2,i}(t) = \int_0^A \tilde{k}_i(a)e^{\int_0^{t-a} \dot{\eta}_i(\tau) d \tau} w_{2,i}(t-a) d a,
    \end{align}
    resulting in 
    \begin{align}
        e^{-\eta_i(t)} w_{2,i}(t) = \int_0^A \tilde{k}_i(a) e^{-\eta_i(t-a)} w_{2,i}(t-a) d a. \label{eq:temp-eq-proof}
    \end{align}
    Note that, by definition from (\ref{eq:system-trafo}), the states of the internal dynamics are given by
    \begin{align}
        \psi_i(t) = \frac{e^{-\eta_i(t)} w_{2,i}(t)}{x_i^*(0)} -1. \label{eq:temp-eq-proof2}
    \end{align}
    Plugging in the results from (\ref{eq:temp-eq-proof}) into (\ref{eq:temp-eq-proof2}) yields the proposed IDE (\ref{eq:psi-def}). In a final step, inserting (\ref{eq:temp-eq-proof}) resorted after $w_{2,i}(t)$ in the solutions of the population systems (\ref{eq:solution-x-ides}) from Lemma~\ref{lm:solutions} results in (\ref{eq:solution-x}).
    }
\hfill$\Box$\end{pf}

\new{\begin{proposition}[Equilibrium of the transformed system]\label{prop:equilibrium-transformed}
    The equilibrium state $(\eta_i^*, \psi_i^*)$ of the transformed system~\eqref{eq:transformed-system}, under the equilibrium input $u^*=\zeta_2-\lambda_1$~\eqref{eq:u_star_constraint}, is $(\eta_1^*, \eta_2^*) = (0,0)$, $\rev{(\psi_1^*, \psi_2^*) = (0,0)}$. 
\end{proposition}
\rev{
\begin{pf}
    Substituting the equilibrium population profiles~\eqref{eq:ss_profiles} into the transformation functionals \eqref{eq:pi-function} results in
    \begin{equation}
        \Pi_i[x_i^*] = \frac{\int_0^A \int_a^A k_i(s) e^{-\int_0^s (\zeta_i + \mu_i(\ell)) d\ell} ds\ da}{\int_0^A a k_i(a) e^{-\int_0^a (\zeta_i + \mu_i(\ell)) d\ell} da}  
    \end{equation}
    where the denominator is result of applying the Cauchy formula for repeated integration to simplify the numerator. Thus, $\Pi_i[x_i^*] =1$ at the equilibrium and, consequently, $\eta_i^* = 0$ for the equilibrium population distribution. With $\Pi_i[x_i^*]=1$ and $x_i(a,t) = x_i^*(a)$, $\psi_i(t) = 0$ for all $t\in[-A,\infty)$. This is verified with \eqref{eq:doteta1}-\eqref{eq:psi-def}. Equation \eqref{eq:psi-def} is satisfied for all $t\in[-A,\infty)$ with $\psi_i^*(t)=\psi_i^*=0$. Additionally, \eqref{eq:doteta1}-\eqref{eq:doteta2} with $\dot{\eta}_1=\dot{\eta}_2 = 0$ results in 
    \begin{align}
        \eta_1^* &= \ln \left( \frac{\zeta_2-u^*}{\int_0^A b_2(a)x_1^*(a)(1+\rev{\psi_{1}^*})da} \right), \\
        \eta_2^* &= \ln \left( \frac{\zeta_1}{\int_0^A b_1(a)x_2^*(a)(1+\rev{\psi_{2}^*})da} \right).
    \end{align}
    Making use of the definition of $\lambda_i$ in \eqref{eq:def_lambda} and the resulting constraints on $\zeta_i$, $u^*$, we conclude that the ODE's equilibrium is  $(\eta_1^*, \eta_2^*) = (0,0)$.
\hfill$\Box$\end{pf}
}}
\rev{Instead of dealing with an IPDE system, the population dynamics are now expressed in terms of an actuated ODE system ($\eta_i$) and autonomous IDEs ($\psi_i$), referred to as internal dynamics. This is particularly convenient because} it was proved in \cite{karafyllis2017stability} that the states $\psi_i$ 
are restricted to the sets
\begin{align}
    \mathcal{S}_i= \Big\{ &\psi_i \in C^0([-A,0];(-1,\infty)): \nonumber \\
    &P(\psi_i)=0 \land \psi_i(0) = \int_0^A \tilde{k}_i(a) \psi_i(-a) d a \Big\}
\end{align}
where
\begin{align}
    P(\psi_i) = \frac{\int_0^A \psi_i(-a) \int_{a}^A \tilde{k}_i(s) d s d a}{ \int_0^A a \tilde{k}_i(a) d a },
\end{align}
and that they are globally exponentially stable in the $\mathcal{L}^{\infty}$ norm. This means that there exist $M_i\geq1$, $\sigma_i\geq0$ such that
\begin{align}
\label{eq:psi-decays}
    |\psi_i(t-a)| \leq M_i e^{-\sigma_i t}||\psi_{i,0}||_{\infty} 
\end{align}
holds for all $t\geq 0$ and $\psi_{i,0}\in \mathcal{S}_i$ and all $a\in[-A,0]$.

Please note that in the following the argument $(t)$ in the state $\eta_i(t)$, $\psi_i(t)$ is dropped, and $\psi_{i,t}:=\psi_i(t-a)$ denotes the ``age-history'' of $\psi_i$ at certain $t\geq0$.
Furthermore, the vector states
$\eta:=[\eta_1, \eta_2]$, 
$\psi:=[\psi_1, \psi_2]$, 
are introduced for a more concise notation.

\section{Control Design}\label{sec:control-design}
\rev{
Transforming the IPDE system~\eqref{eq:sys} into an (unstable) actuated ODE subsystem \eqref{eq:doteta1}, \eqref{eq:doteta2} coupled with an autonomous but exponentially stable IDE subsystem \eqref{eq:psi-def} (referred to as the internal dynamics) allows for initially neglecting the stable IDE component. }
Hence, for the sake of control design, the ODE system with $\psi_i \equiv 0$, 
\new{\begin{subequations}\label{eq:ode-system-before-phi}
    \begin{align}
        \dot{\eta}_1 &= \zeta_1 - \lambda_2 e^{\eta_2},    \\
        \dot{\eta}_2 &=  \zeta_2 - \lambda_1 e^{\eta_1} +  u^* - u,
    \end{align}
\end{subequations}
is considered.} Using the definition of $\zeta_i$~\eqref{eq:ss} and $\lambda_i$~\eqref{eq:def_lambda}, (\ref{eq:ode-system-before-phi}) is rewritten as
\begin{subequations}\label{eq:ode-system}
    \begin{align}
        \dot{\eta}_1 &= \phi_2(\eta_2),    \\
        \dot{\eta}_2 &=  \phi_1(\eta_1) +  u^* - u .
    \end{align}
\end{subequations}
with functions
\begin{align}\label{eq:phi-def}
    \phi_i(\eta_i) &:= \lambda_i\left(1 - e^{\eta_i}\right).
\end{align}
Both $\phi_1$ and $\phi_2$ are decreasing in the concentration of the respective other predator $\eta_2$, $\eta_1$,
\new{which have a repressive effect on one another. The dilution $u$ only has a repressive effect on the harvested predator. }
Hence, the unharvested predator and the dilution work in tandem relative to the harvested predator population, posing a risk of overharvesting. 
\new{This compels a choice of a feedback law that may take negative values, to compensate for the fact that, when the harvested predator population is depleted, positive dilution may result in the extinction of both populations.}

\new{\textbf{Open-Loop Instability.}} 
\new{In the uncontrolled case $u=u^*$,~\eqref{eq:ode-system} becomes
\begin{subequations}\label{eq:ode-system-uncontrolled}
    \begin{align}
        \dot{\eta}_1 &= \phi_2(\eta_2),    \\
        \dot{\eta}_2 &=  \phi_1(\eta_1).
    \end{align}
\end{subequations}
The unique equilibrium of~\eqref{eq:ode-system-uncontrolled} is $(\eta_1, \eta_2)=(0,0)$
and the Jacobian 
\new{is $\begin{bmatrix} 0 & -\lambda_2\\ -\lambda_1 & 0 \end{bmatrix}$,}
with saddle-point eigenvalues $s_{1,2} = \pm \sqrt{\lambda_1 \lambda_2}=\pm \sqrt{\zeta_1 (\zeta_2-u^*)}$.} 

\smallskip
\textbf{A Backstepping Feedback Design.}
For the control design, the backstepping transformation 
\begin{align}
    z = \eta_2 - c_1\eta_1, \ c_1>0
\end{align}
is introduced. Straightforward calculation then yields
\begin{subequations}
    \begin{align}
        \dot \eta_1 &= \lambda_2(1-e^{c_1\eta_1}) - \lambda_2 e^{c_1\eta_1}(e^z-1) \\
        \dot \eta_2 &= \lambda_1(1-e^{\eta_1}) + u^* - u \\
        \dot z &= \lambda_1(1-e^{\eta_1}) - c_1 \lambda_2 (1-e^{\eta_2}) + u^*- u.
    \end{align}
\end{subequations}
Before designing a controller, the two positive definite and radially unbounded functions,
\begin{align}
\label{eq-omega-mu}
    \omega(q) = e^q - 1 - q ,\quad
    \mu(q) = \sinh^2\left(\frac{q}{2}\right),
\end{align}
are introduced.
Next, consider the Lyapunov functions
\begin{align}
    V_1(\eta_1) &= \omega(-c_1 \eta_1),\label{eq:V1}\\
    V_2(\eta_1, \eta_2) &= \omega(\eta_2-c_1\eta_1)=\omega(z) \label{eq:V2}
\end{align}
and the overall Lyapunov function
\begin{align}
    V_3(\eta_1, \eta_2) &= \theta V_1(\eta_1) + \reviewed{V_2(\eta_1,\eta_2)} \notag\\
    &= \theta \omega(-c_1 \eta_1) 
    +\omega(\eta_2-c_1\eta_1), \ \theta>0.\label{eq:lyapunov-ode}
\end{align}
Noting that $\omega'(q)=e^q-1$, as well as that
\begin{align}
    (e^q-1)(e^{-q}-1) = -4 \sinh^2\left(\frac{q}{2}\right),\label{eq:sinh-property}
\end{align}
\reviewed{the derivatives of the Lyapunov functions $V_1$, $V_2$ result in}
\new{\begin{align}
    \dot V_1 
    =& \ c_1 \dot \eta_1 (1-e^{-c_1 \eta_1}) \notag \\
    =& \ c_1 \lambda_2 (1-e^{c_1\eta_1})(1-e^{-c_1 \eta_1}) \notag\\
    &-  c_1 \lambda_2 e^{c_1 \eta_1} (e^{z}-1)(1-e^{-c_1\eta_1})\notag  \\
    =& -4c_1 \lambda_2 \mu(-c_1 \eta_1) - c_1 \lambda_2 (e^z-1)(e^{c_1 \eta_1} - 1),\\
    \dot V_2 
    =& \ \dot z (e^z - 1) \notag \\
    =& \ (e^z - 1) \left( \lambda_1(1-e^{\eta_1}) - c_1\lambda_2(1-e^{\eta_2}) + u^* - u \right).
\end{align}}
\mycomment{Constructing $\dot V_3$ based on \eqref{eq:lyapunov-ode}, we note that a negative definite Lyapunov derivative aligning with our backstepping transformation and the construction of $V_3$ is
\begin{align}
    \dot V_3 = - 4 \lambda_2 \left[ \theta c_1 \mu(-c_1 \eta_1) + c_2 \mu(\eta_2 - c_1\eta_1) \right],
\end{align}
which is achieved through the feedback law
\begin{align}
    u = u^* + \lambda_2 \Big[ - c_2(e^{-z} - 1) - \theta c_1 (e^{c_1 \eta_1} -1) \notag \\
    + \frac{\lambda_1}{\lambda_2} (1-e^{\eta_1}) - c_1  (1 - e^{\eta_2}) \Big]\label{eq:control-law}
\end{align}
with $c_2>0$}
The feedback law
\begin{align}
    u = u^* + \lambda_2 \Big[ - c_2(e^{-z} - 1) - \theta c_1 (e^{c_1 \eta_1} -1) \notag \\
    + \frac{\lambda_1}{\lambda_2} (1-e^{\eta_1}) - c_1  (1 - e^{\eta_2}) \Big]\label{eq:control-law}
\end{align}
with $c_2>0$ then produces the overall Lyapunov derivative
\begin{align}
    \dot V_3 = - 4 \lambda_2 \left[ \theta c_1 \mu(-c_1 \eta_1) + c_2 \mu(\eta_2 - c_1\eta_1) \right],
\end{align}
which is negative definite \reviewed{and aligns with our backstepping transformation and construction of $V_3$.}
Global asymptotic stability of the equilibrium $\eta=0$ then follows.

\new{\begin{theorem} \label{thm:global-stabilization-ode}
Under the feedback law \eqref{eq:control-law}, the equilibrium $\eta=0$ of the system \eqref{eq:ode-system} is globally asymptotically and locally exponentially stable. The control signal $u(t)$ remains bounded though not necessarily positive. Furthermore, $u(t)>0$ for all $t\geq 0$ and
for all $\eta(0)$ belonging to the largest level set of $V_3(\eta)$  within the set
\rev{
\begin{align}
    \mathcal{D}_0 = \Big\{ x\in\mathcal{F} \Big| &     u^* + \lambda_2 \Big[ - c_2 \left (\frac{(\Pi_1[x_1])^{c_1}}{\Pi_2[x_2]} - 1 \right) \notag \\
    & - \theta c_1 \left((\Pi_1[x_1])^{c_1} -1\right) + \frac{\lambda_1}{\lambda_2} (1-\Pi_1[x_1]) 
    \notag \\
    &- c_1  (1 - \Pi_2[x_2]) \Big] > 0 \Big\}.
\end{align}}
\end{theorem}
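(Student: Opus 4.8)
The plan is to read global asymptotic stability directly off the identity for $\dot V_3$ that has already been derived, then upgrade it to local exponential stability, establish boundedness of $u$, and finally obtain the positivity claim by a sublevel-set invariance argument.

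\textbf{Global and local stability.} First I would verify that $V_3$ in \eqref{eq:lyapunov-ode} is a valid Lyapunov function on all of $\mathbb{R}^2$. The scalar function $\omega(q)=e^q-1-q$ from \eqref{eq-omega-mu} is positive definite and radially unbounded, vanishing only at $q=0$. Since the linear change of coordinates $(\eta_1,\eta_2)\mapsto(-c_1\eta_1,\,\eta_2-c_1\eta_1)$ is a bijection for $c_1>0$, the composite $V_3(\eta)=\theta\,\omega(-c_1\eta_1)+\omega(z)$ is positive definite and radially unbounded in $\eta$. The computed derivative $\dot V_3=-4\lambda_2\big[\theta c_1\,\mu(-c_1\eta_1)+c_2\,\mu(z)\big]$ is negative definite, because $\mu(q)=\sinh^2(q/2)\ge 0$ vanishes only at $q=0$ and the two arguments $-c_1\eta_1$ and $z=\eta_2-c_1\eta_1$ vanish simultaneously only at $\eta=0$; Lyapunov's direct theorem then yields global asymptotic stability. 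For the local exponential claim I would use $\omega(q)=\tfrac12 q^2+O(q^3)$ and $\mu(q)=\tfrac14 q^2+O(q^4)$, so that both $V_3$ and $-\dot V_3$ are squeezed between positive-definite quadratic forms on a neighborhood of the origin; hence $\dot V_3\le-\kappa V_3$ there for some $\kappa>0$, and integrating this inequality together with the quadratic lower bound on $V_3$ gives exponential decay of $|\eta(t)|$.

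\textbf{Boundedness of $u$.} Since $\dot V_3\le 0$ and $V_3$ is radially unbounded, each sublevel set $\Omega_c=\{\eta:V_3(\eta)\le c\}$ is compact and positively invariant, so $\eta(t)$ remains in $\Omega_{V_3(\eta(0))}$ for all $t\ge0$. The feedback \eqref{eq:control-law} is a continuous function of $\eta$ (through $\eta_1$, $\eta_2$ and $z$), hence bounded on this compact set, which proves $u(t)$ is bounded. No sign is available in general, since the bracketed terms in \eqref{eq:control-law} may be negative.

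\textbf{Positivity and region of attraction.} Evaluating \eqref{eq:control-law} at $\eta=0$ gives $u=u^*>0$ by the equilibrium constraint \eqref{eq:u_star_constraint}, so the origin lies in the open set $\mathcal{D}_0=\{\eta:u(\eta)>0\}$ on which the right-hand side of \eqref{eq:control-law} is strictly positive. Setting $c^\star=\sup\{c>0:\Omega_c\subseteq\mathcal{D}_0\}$, the associated sublevel set $\Omega_{c^\star}$ is the largest level set of $V_3$ contained in $\mathcal{D}_0$, is nonempty (it contains a neighborhood of the origin), and is positively invariant by the previous step. Thus for any $\eta(0)\in\Omega_{c^\star}$ the trajectory stays in $\mathcal{D}_0$, so $u(t)>0$ for all $t\ge0$, while global asymptotic stability still forces $\eta(t)\to 0$.

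I expect the last step to be the main obstacle. The delicate point is transferring invariance of the $V_3$-sublevel sets to the implicitly defined, possibly nonconvex set $\mathcal{D}_0$ and confirming that $c^\star$ is strictly positive; this rests on $\mathcal{D}_0$ being open and containing the origin together with compactness of sublevel sets. Some care is also needed at the boundary: where $\{V_3=c^\star\}$ touches $\partial\mathcal{D}_0$ the control can vanish, but because $\dot V_3<0$ for $\eta\neq0$, $V_3$ strictly decreases and the trajectory enters the interior of $\Omega_{c^\star}$ for $t>0$, preserving strict positivity. This boundary analysis, and the resulting conservativeness of the estimate, is precisely where the region-of-attraction characterization of the theorem enters.
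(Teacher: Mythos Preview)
Your proposal is correct and follows essentially the same approach as the paper: the paper's proof of Theorem~\ref{thm:global-stabilization-ode} is the derivation in Section~\ref{sec:control-design} culminating in the negative-definite expression for $\dot V_3$, with global asymptotic stability then asserted directly. You supply the details the paper leaves implicit---the radial unboundedness of $V_3$ via the bijectivity of $(\eta_1,\eta_2)\mapsto(-c_1\eta_1,z)$, the local exponential upgrade through the quadratic Taylor behavior of $\omega$ and $\mu$, boundedness of $u$ by compactness of invariant sublevel sets, and the positivity claim via the standard sublevel-set invariance argument---none of which the paper spells out beyond stating them in the theorem.
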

}

\section{Stability With Nonzero $\psi$}\label{sec:stability-results}
\rev{Up to now, the internal dynamics $\psi_i$ were neglected in order to facilitate control design. Now,} it is shown that the control law (\ref{eq:control-law}) based on the ODE system neglecting the internal dynamics (\ref{eq:ode-system}) \rev{also} stabilizes the full ODE-IDE system (\ref{eq:transformed-system}).
For this, the first step is to introduce the mapping $v_i:\mathcal{S} \rightarrow \mathbb{R}_+$,
\begin{align}
    v_i(\psi_{i,t}) = \ln \left(1 + \int_0^A \bar{b}_j(a) \psi_i(t-a) d a\right),
    \label{eq:definition-v-map} %
\end{align}
for $i,j \in \{1,2\}$, $i\neq j$, with 
\begin{align}
    \bar{b}_i(a) & = \frac{b_i(a)\xstar{j}}{\int_0^A b_i(\alpha)x^\ast_j(\alpha) d \alpha}, \ \int_0^A \bar{b}_i(a) d a = 1.\label{eq:definition-b-bar}
\end{align}
Then, (\ref{eq:doteta1}), (\ref{eq:doteta2}) are rewritten as
\begin{subequations}
    \begin{align}
        \dot{\eta}_1 &= 
        \phi_2(\eta_2+v_2(\psi_2)),    \\
        \dot{\eta}_2 &=  
        \phi_1(\eta_1+v_1(\psi_1)) +  u^* - u .
    \end{align}
\end{subequations}
For better readability in the following, denote
\begin{alignat}{2}
\label{psihat-defs}
    \phi_i &:= \phi_i(\eta_i), \ \ \ \hat{\phi}_i &:= \phi_i(\eta_i+v_i).
\end{alignat}
Additionally, a technical assumption on the birth kernel is needed for the definition of a Lyapunov function \cite{karafyllis2017stability}.

\begin{assumption}[Birth kernel \cite{karafyllis2017stability}]\label{assump:technical_assumption}
There exist constants $\kappa_i>0$ such that
$\int_0^A |\tilde{k}_i(a)- z_i \kappa_i \int_a^A \tilde{k}_i(s) d s | d a < 1$ with $\reviewed{z_i = (\int_0^A a \tilde{k}_i(a)  d a)^{-1}}$. 
\end{assumption}

The technical Assumption~\ref{assump:technical_assumption} is mild, satisfied by arbitrary mortality rate $\mu_i$ for every birth kernel $k_i$ that has a finite number of zeros on $[0, A]$, \rev{imposing no significant restriction to the class of systems for which the following stability results apply, and especially holds true for bioreactor populations.} 
The role of Assumption~\ref{assump:technical_assumption} is crucial for the establishment of the function $G$ \new{that is necessary to construct Lyapunov functions}. Means of verifying the validity of Assumption~1 and detailed discussions are given in \cite{karafyllis2017stability}.
If Assumption 1 holds then there exist constants $\sigma_i>0$, with $i=1,2$, such that the inequalities $\int_0^A | \tilde{k}_i - z_i \kappa_i \int_a^A \tilde{k}_i(s) d s | e^{\sigma_i a} d a <1 $ for $i=1,2$ hold.

With this, the functionals $G_i$ defined as
\begin{align}
        G_i(\psi_{i}) &: = \frac{\max_{a \in [0,A]} |\psi_i(-a)| e^{\sigma_i ({ A} - a)}}{1 + \min(0, \min_{a \in [0,A]} \psi_i(-a))} 
        , \label{eq:g-functional}
\end{align}
are introduced. These functionals are positive definite in $\psi_i$ in the sense of the maximum norm in $a$, and, recalling from \cite{karafyllis2017stability}, their Dini derivatives ($D^+$) satisfy 
\begin{align}
D^+ G_i(\psi_{i,t}) \leq -\sigma_i G_i(\psi_{i,t}) 
\label{eq:dini_g}
\end{align}
along solutions $\psi_{i,t}$ of the IDE for with sufficiently small parameters $\sigma_i>0$. This property follows from Corollary~4.6 and the proof of Lemma~5.1 of \cite{karafyllis2017stability} with 
$C_i(\psi_{i})=\frac{1}{(1 + \min(0, \min_{a \in [0,A]} \psi_i(-a)))^2}$ and $
b(s) = s $, similar to what is discussed in \cite{haacker2024stabilization}. 

Further, to state the main stability result, the following lemmas are needed \rev{for mathematical purposes.}
\begin{lemma}\label{lem:b-value}
    Consider the parametrized function 
    \begin{align}
        f(y;r,\beta) = \frac{\beta r \omega'(ry)}{1 +\beta \omega(ry)} = \frac{r(e^{ry}-1)}{e^{ry}-ry-1+\beta^{-1}} \label{eq:f-function}
    \end{align}
    with $r\neq 0$ and $\beta>0$. Then,
    \reviewed{\begin{eqnarray}
    \max_{y \in \mathbb{R}} \left\vert f(y;r,\beta)\right\vert &=& 
    \vert r\vert   B(\beta)
    \\
    B(\beta) & :=&  \frac{e^{1+\beta^{-1} + d(\beta)} -1}{e^{1+\beta^{-1} + d(\beta)} -d(\beta) - 2}\notag
    ,\label{eq:b-value}
\end{eqnarray}}%
where $d(\beta) = W_0\left(-e^{-1 - \beta^{-1}}\right)$ and $W_0$ denotes the principal branch of the Lambert $W$ function.
\end{lemma}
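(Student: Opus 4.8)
The plan is to reduce the two-parameter maximization to a single-variable problem and then locate the maximizer via the Lambert $W$ function. First I would substitute $q = ry$; since $r \neq 0$, $q$ ranges over all of $\mathbb{R}$ as $y$ does, and from \eqref{eq:f-function} one has $f(y;r,\beta) = r\, g(q)$ with $g(q) := \dfrac{\omega'(q)}{\omega(q)+\beta^{-1}} = \dfrac{e^q-1}{e^q-q-1+\beta^{-1}}$, so $\max_{y}|f| = |r|\,\max_{q}|g(q)|$. Because $\omega(q)\geq 0$ with equality only at $q=0$ (from \eqref{eq-omega-mu}) and $\beta^{-1}>0$, the denominator is strictly positive, so $g$ is smooth; moreover $\mathrm{sign}\,g(q)=\mathrm{sign}\,q$, $g(0)=0$, $g(q)\to 1$ as $q\to+\infty$, and $g(q)\to 0$ as $q\to-\infty$. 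Hence the supremum of $|g|$ is attained at an interior critical point.

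Second, I would differentiate. Writing $N=e^q-1$, $D=e^q-q-1+\beta^{-1}$ and using $N'=e^q$, $D'=N$, the stationarity condition $g'=0$ reduces after cancellation to $e^q(1+\beta^{-1}-q)=1$. Substituting $t=q-1-\beta^{-1}$ turns this into $t\,e^{t}=-e^{-1-\beta^{-1}}$, whose solutions are $t=W(-e^{-1-\beta^{-1}})$. As $\beta^{-1}>0$ the argument lies in $(-1/e,0)$, so both real branches contribute: the principal branch gives $d(\beta)=W_0(-e^{-1-\beta^{-1}})\in(-1,0)$ and a positive critical point $q^* = 1+\beta^{-1}+d(\beta)>\beta^{-1}>0$, while $W_{-1}$ gives $d'(\beta)<-1$ and a negative critical point. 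Using the stationarity relation itself, which yields $e^{q^*}=-1/d(\beta)$ and collapses the denominator to $D(q^*)=e^{q^*}-d(\beta)-2$, evaluation gives $g(q^*)=\dfrac{e^{q^*}-1}{e^{q^*}-d(\beta)-2}=B(\beta)$ exactly as in \eqref{eq:b-value} (equivalently $B(\beta)=1/(1+d(\beta))>1$).

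The main obstacle is to certify that this positive critical value is the \emph{global} maximum of $|g|$, rather than the negative critical point; indeed for large $\beta$ one checks that $|g|$ exceeds $1$ on $q<0$, so a direct bound by $1$ fails. At the negative critical point the same computation gives $|g| = 1/(|d'(\beta)|-1)$, so the claim $B(\beta)\geq|g|$ there is equivalent to $d(\beta)+d'(\beta)\leq-2$, i.e. $W_0(x)+W_{-1}(x)\leq-2$ for $x=-e^{-1-\beta^{-1}}$. I would prove this by setting $s_1=-d\in(0,1)$ and $s_2=-d'\in(1,\infty)$, the two roots of $\rho(s):=s e^{-s}$ at level $e^{-1-\beta^{-1}}$, which straddle the unique maximizer $s=1$ of $\rho$, and showing $s_1+s_2>2$. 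Writing $s_1=1-h_1$, $s_2=1+h_2$, this follows from the one-variable inequality $\rho(1+h)>\rho(1-h)$ for $h\in(0,1)$: the difference equals $e^{-1}\chi(h)$ with $\chi(h)=(1+h)e^{-h}-(1-h)e^{h}$, and since $\chi(0)=0$ and $\chi'(h)=2h\sinh h>0$, one gets $\chi(h)>0$; combined with $\rho$ being decreasing on $(1,\infty)$ this forces $h_1<h_2$, hence $s_1+s_2>2$.

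Finally I would confirm $q^*$ is a maximizer (not a saddle) — immediate from the boundary behaviour $g(q^*)>1=\lim_{q\to+\infty}g$ together with uniqueness of the positive critical point — so that $\max_q|g(q)| = B(\beta)$, and therefore $\max_{y\in\mathbb{R}}|f(y;r,\beta)| = |r|\,B(\beta)$. The remaining steps (the derivative cancellation, the Lambert substitution bookkeeping, and the second-order check) are routine.
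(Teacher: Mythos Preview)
Your proposal is correct and follows the same calculus route as the paper: reduce to a single variable, observe the denominator is strictly positive so there are no singularities, read off the asymptotics, locate the interior critical points via the Lambert $W$ equation, and evaluate. The paper's own proof actually stops short of one step you carry out: it simply selects the $W_0$ branch, writes down $y^*=r^{-1}[1+\beta^{-1}+W_0(-e^{-1-\beta^{-1}})]$, and asserts this is the global maximizer of $|f|$, without comparing against the second critical point coming from $W_{-1}$. Your argument that $W_0(x)+W_{-1}(x)\leq -2$ on $(-1/e,0)$, proved via the elementary inequality $\rho(1+h)>\rho(1-h)$ for $\rho(s)=se^{-s}$, closes precisely that gap and makes the identification of $B(\beta)$ with the \emph{global} maximum fully rigorous. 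So your proof is strictly more complete than the paper's, at the cost of some extra bookkeeping.
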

\begin{pf}
    The denominator of $f(y;r,\beta)$ is $>0$, so $f$ has no singularities, and both limits exist, i.e. $\lim_{ry\to-\infty}f=0$, $\lim_{ry\to\infty}f=r$. Therefore, $\vert f \vert$ is maximized at the extremum of $f$.
By the first order necessary conditions for extrema, $y^*$ is located at points where either $\frac{\partial f}{\partial y}$ is not defined or when $\frac{\partial f}{\partial y} = 0$. Thus, the extrema $y^*(r,\beta)$  of $f$ satisfy
\begin{equation}
    \left(ry^*(r,\beta) - 1 - \beta^{-1}\right)e^{ry^*(r,\beta)} = -1.
\end{equation}
Using the principal branch of the Lambert $W$ function, it is possible to get an explicit form
\begin{equation}
    y^*(r,\beta) = r^{-1} \left[1 + \beta^{-1} + W_0\left(-e^{-1-\beta^{-1}}\right)\right]
\end{equation}
for the global maximizer $y^*$ of $\vert f \vert$. The global maximum of $\vert f \vert$ is $\vert r\vert B(\beta)$ with $B(\beta)>0$ for any $\beta > 0$ (cf. Figure~\ref{fig:b-curve}).
\mbox{}\hfill$\Box$\end{pf}

\begin{figure}[t]
    \centering
    \includegraphics{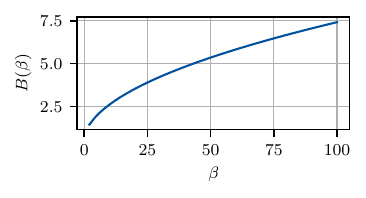}
    \caption{The function $B(\beta)$ is a monotonically increasing positive function defined for $\beta\in(0,\infty)$ and with an image of $(0,\infty)$.}
    \label{fig:b-curve}
\end{figure}

\new{\begin{lemma}\label{lm:h-function}
The function
\begin{align}
\label{h-Lyapunov}
       h(p)&:= 
       \int_0^p \frac{1}{z} (e^z-1) d z.
\end{align}
is positive definite and radially unbounded.
\end{lemma}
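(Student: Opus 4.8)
The plan is to work with the integrand $g(z) := \frac{1}{z}(e^z - 1)$ and exploit the fact that it is everywhere positive and, on the nonnegative axis, bounded below by $1$. First I would record that $g$ extends continuously to $z = 0$ with $g(0) = 1$, since $\lim_{z \to 0}\frac{e^z - 1}{z} = 1$; hence the integrand has only a removable singularity at the origin, so $h$ is well defined and continuously differentiable with $h'(p) = g(p)$ by the fundamental theorem of calculus. Because $h$ enters the subsequent Lyapunov analysis composed with the functional $G_i$, whose values lie in $\mathbb{R}_+$, it suffices to establish the two properties on the domain $p \ge 0$.

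Next I would prove the key pointwise bound $g(z) \ge 1$ for all $z \ge 0$. Writing $g(z) - 1 = \frac{e^z - 1 - z}{z} = \frac{\omega(z)}{z}$ with $\omega$ as in \eqref{eq-omega-mu}, positivity follows immediately from $\omega(z) > 0$ for $z \ne 0$ (already used in the analysis of $V_1$ and $V_2$) together with $z > 0$; equality holds only in the limit $z \to 0^+$. Integrating this bound from $0$ to $p$ yields $h(p) \ge \int_0^p 1\, dz = p$ for every $p \ge 0$, with strict inequality once $p > 0$.

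This single estimate delivers both conclusions at once. Since $h(0) = 0$ and $h(p) \ge p > 0$ for $p > 0$, the function $h$ is positive definite; since $h(p) \ge p \to \infty$ as $p \to \infty$, it is radially unbounded. Equivalently, one may argue by monotonicity: $h'(p) = g(p) > 0$ for $p > 0$ shows that $h$ is strictly increasing off the origin, which gives positive definiteness, while the same linear lower bound gives radial unboundedness.

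I do not anticipate a genuine obstacle here; the only points requiring care are the removable singularity at $z = 0$, which must be handled to justify that the integral converges and defines a smooth function, and being explicit that the argument fed to $h$ is nonnegative, so that positive definiteness is the appropriate notion on $\mathbb{R}_+$. Indeed, since $h$ is strictly increasing one has $h(p) < 0$ for $p < 0$, which is exactly why the nonnegativity of the composed argument $G_i$ is what makes the positive-definiteness statement meaningful.
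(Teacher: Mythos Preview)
Your proof is correct and follows essentially the same route as the paper: both arguments restrict to $p\ge 0$, bound the integrand $\frac{e^z-1}{z}$ from below by a positive quantity, and integrate to obtain a lower bound on $h(p)$ that yields both positive definiteness and radial unboundedness. The only cosmetic difference is that the paper uses the Taylor expansion to get the slightly sharper bound $\frac{e^z-1}{z}\ge 1+\tfrac{z}{2}$ and hence $h(p)\ge p+\tfrac{p^2}{2}$, whereas you use the already-defined function $\omega$ to get $\frac{e^z-1}{z}\ge 1$ and $h(p)\ge p$; either suffices.
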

\begin{pf}
    For $z\geq0$, by Taylor expansion,  
$\frac{{\rm e}^z-1}{z} = 1+ \frac{z}{2} +r(z) \geq 1+ \frac{z}{2}$, where $r(0)=0$ and $r(z)\geq 0$  contains powers of $z$ of second and higher orders. Hence, for $p\geq 0$, it holds that $h(p) =  p+\frac{p^2}{2} +\int_0^p r(s)ds \geq p+\frac{p^2}{2}$, which is radially unbounded in $p\geq 0$.
\hfill$\Box$\end{pf}
}

\new{Lemma~\ref{lem:b-value} provides an estimate to upper bound terms occuring in the Lyapunov derivative, whereas Lemma~\ref{lm:h-function} is needed for constructing the Lyapunov function itself. With these prerequisites, the main result can now be stated.}
\smallskip

\begin{theorem}\label{thm:local-stability}
Let Assumption~\ref{assump:technical_assumption} hold and define $\mathcal{S}=\mathcal{S}_1 \times \mathcal{S}_2$. Consider system (\ref{eq:transformed-system}) with the control law (\ref{eq:control-law}),
in original coordinates given as
\begin{align}
\label{eq-controller-original-variables}
    u =& u^* + \lambda_2 \Big[ c_2 \left(1-\frac{(\Pi_1(x_1))^{c_1}}{\Pi_2(x_2)}\right)\notag \\
   &+\theta c_1 \left( 1-(\Pi_1(x_1))^{c_1}  \right) + \frac{\lambda_1}{\lambda_2} \left(1- \Pi_1(x_1)\right) \notag \\
   &+c_1(\Pi_2(x_2)-1)  \Big],
\end{align} %
on the state space $\mathbb{R}^2 \times S$, which is a subset of the Banach space $\mathbb{R}^2 \times C^0 ([-A,0] ; \mathbb{R}^2)$ with the standard topology.  
Suppose the parameters satisfy $c_1>0$, $c_2>0$, $\theta>0$.
Denote $\eta=[\eta_1, \eta_2]^\top$, $\psi = [\psi_{1}, \psi_{2}]^\top$ and the Lyapunov functional 
\begin{align}
        V(\eta, \psi) &= \ln(1+V_3(\eta))
        + \frac{\gamma_1 }{\sigma_1}
        h(G_1(\psi_{1})) 
        + \frac{\gamma_2 }{\sigma_2}
        h(G_2(\psi_{2})) 
        \label{eq:lyapunov-ode-ide}
\end{align}
with $V_3(\eta)$ from (\ref{eq:lyapunov-ode}), the positive weights chosen as
\begin{subequations}
    \begin{align}\label{eq:gamma-restrictions}
        \gamma_1 &> 2\lambda_1 B(1),\\
        \gamma_2 &> 2\lambda_2 c_1(B(1) + B(\theta)) ,
    \end{align}%
\end{subequations}%
and the positive definite radially unbounded function \eqref{h-Lyapunov}.
Then, 
\begin{enumerate}
    \item positive invariance holds for all the level sets of $V$ of the form
\begin{align}
\label{eq:Omegac}
    \Omega_c:=\{\eta \in \mathbb{R}^2, \psi \in \mathcal{S} \ |  \ V(\eta, \psi)\leq c\}
\end{align} 
that are within the set 
\rev{
\begin{align}
\label{eq:constraints}
    \mathcal{D} := &\Biggl\{ x \in \mathcal{F} \ 
       \Bigg|\ \Pi_1[x_1] \leq\frac{\gamma_1}{2\lambda_1 B(1)}, 
      \notag \\
    & \ \Pi_2[x_2]\leq \frac{\gamma_2}{2\lambda_2 c_1(B(1) + B(\theta))}, 
    \nonumber\\
    &  \ \ 
    u^* + \lambda_2 \Big[ - c_2(\frac{(\Pi_1[x_1])^{c_1}}{\Pi_2[x_2]} - 1) \notag \\
    & \quad - \theta c_1 ((\Pi_1[x_1])^{c_1} -1) 
     + \frac{\lambda_1}{\lambda_2} (1-\Pi_1[x_1])\notag \\
    & \quad- c_1  (1 - \Pi_2[x_2]) \Big] >
 0 \Biggr\},
\end{align}
}
namely, within the set $\mathcal{D}$ where none of the predator's surplus is too big;
\item the input $u(t)$ remains positive for all time; 
\item there exists $\theta_0\in\mathcal{KL}$ such that, for all initial conditions $(\eta_0, \psi_0)$ within the largest $\Omega_c$ contained in $\mathcal{D}$, the following estimate holds, 
\begin{equation}
\label{eq:KL-estimate}
    \left|(\eta(t),G(t)) \right| \leq \theta_0\left(\left|(\eta(0),G(0) \right|, t\right), \quad\forall t\geq 0\,,
\end{equation}
where $G(t) = \left(G_1(\psi_{1,t}),G_2(\psi_{2,t}) \right)$;
\item \new{the equilibrium $\eta=0, \psi=0$ is locally exponentially stable in the norm $\sqrt{\eta_1^2+\eta_2^2} + \|\psi_1\|_\infty+ \|\psi_2\|_\infty$ 
    with an exponential decay rate of at least $\min\left\{\frac{\sigma_1}{1+\varepsilon},\frac{\sigma_2}{1+\varepsilon}, -\frac{\lambda_2}{1+\varepsilon} {\rm Re}(p_1),-\frac{\lambda_2}{1+\varepsilon}{\rm Re}(p_2)\right\}>0$, where $\sigma_1,\sigma_2$ are defined in \eqref{eq:psi-decays}, $p_1,p_2$ are the roots of the polynomial $s^2 + (c_1+c_2) s + \theta + c_1c_2$, and $\varepsilon>0$ is arbitrarily small. }
\end{enumerate}
\end{theorem}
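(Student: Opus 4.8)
The plan is to use $V$ from \eqref{eq:lyapunov-ode-ide} as a Lyapunov functional on $\Omega_c$ and to show that its upper Dini derivative is negative definite in $(\eta,G)$ throughout $\mathcal{D}$, from which all four claims follow. First I would compute $D^{+}V$ along \eqref{eq:transformed-system} with the feedback \eqref{eq:control-law}, using subadditivity of $D^{+}$. For the finite-dimensional part, $\tfrac{d}{dt}\ln(1+V_3)=\tfrac{\dot V_3}{1+V_3}$, where $\dot V_3$ is now evaluated along the \emph{perturbed} dynamics $\dot\eta_1=\hat\phi_2$, $\dot\eta_2=\hat\phi_1+u^{*}-u$. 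Writing $\hat\phi_i-\phi_i=-\lambda_i e^{\eta_i}(e^{v_i}-1)$ and reusing the computation behind Theorem~\ref{thm:global-stabilization-ode}, I would split
\[
\dot V_3=-4\lambda_2\big[\theta c_1\mu(-c_1\eta_1)+c_2\mu(z)\big]+R_1+R_2,
\]
where $R_i$ collects the perturbation terms carrying the factor $(e^{v_i}-1)$. For the infinite-dimensional part, \eqref{eq:dini_g} together with the identity $h'(p)\,p=e^{p}-1$ gives $\tfrac{\gamma_i}{\sigma_i}h'(G_i)D^{+}G_i\le-\gamma_i(e^{G_i}-1)\le-\gamma_i G_i$.

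The crux is to dominate $\tfrac{R_1+R_2}{1+V_3}$ by these stabilizing terms. The key estimate is $|e^{v_i}-1|=\big|\int_0^A\bar{b}_j(a)\psi_i(t-a)\,da\big|\le\max_{a\in[0,A]}|\psi_i(t-a)|\le G_i(\psi_{i,t})$, valid because $e^{\sigma_i(A-a)}\ge1$ and the denominator in \eqref{eq:g-functional} is $\le1$. Collecting terms gives $R_1=-\lambda_1 e^{\eta_1}(e^z-1)(e^{v_1}-1)$ and $R_2=c_1\lambda_2 e^{\eta_2}(e^{v_2}-1)\big[\theta(e^{-c_1\eta_1}-1)+(e^z-1)\big]$. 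Dividing by $1+V_3\ge1+\omega(z)$ and by $1+V_3\ge1+\theta\omega(-c_1\eta_1)$ and applying Lemma~\ref{lem:b-value} with $(r,\beta)=(1,1)$ and $(r,\beta)=(-c_1,\theta)$ bounds the $\eta$-dependent ratios $\tfrac{|\omega'(z)|}{1+\omega(z)}\le B(1)$ and $\tfrac{\theta|\omega'(-c_1\eta_1)|}{1+\theta\omega(-c_1\eta_1)}\le B(\theta)$. The residual factors $e^{\eta_1}$ and $e^{\eta_2}$ are controlled by the first two defining inequalities of $\mathcal{D}$, yielding $\tfrac{R_1}{1+V_3}\le\tfrac{\gamma_1}{2}G_1$ and $\tfrac{R_2}{1+V_3}\le\tfrac{\gamma_2}{2}G_2$ precisely when $\gamma_1>2\lambda_1 B(1)$ and $\gamma_2>2\lambda_2 c_1(B(1)+B(\theta))$, i.e.\ under \eqref{eq:gamma-restrictions}. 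Using $e^{G_i}-1\ge G_i$ to combine the pieces then gives
\[
D^{+}V\le\frac{-4\lambda_2[\theta c_1\mu(-c_1\eta_1)+c_2\mu(z)]}{1+V_3}-\tfrac{\gamma_1}{2}G_1-\tfrac{\gamma_2}{2}G_2,
\]
which is negative definite in $(\eta,G)$ on $\mathcal{D}$.

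With this inequality in hand the four assertions follow. For (1), every $\Omega_c\subseteq\mathcal{D}$ is a sublevel set of the non-increasing $V$, so no trajectory can cross its boundary outward, giving positive invariance; invariance of $\mathcal{S}$ under the IDE is inherited from \cite{karafyllis2017stability}. Claim (2) is immediate, since the third defining inequality of $\mathcal{D}$ is exactly $u>0$ for the control \eqref{eq:control-law}, and by (1) the state never leaves $\mathcal{D}$. For (3), $V$ is positive definite and radially unbounded in $(\eta,G)$ (using Lemma~\ref{lm:h-function} for the $h(G_i)$ terms), so the negative definite $D^{+}V$ yields the $\mathcal{KL}$ estimate \eqref{eq:KL-estimate} by a standard comparison argument for Lyapunov functionals on the invariant set $\Omega_c$. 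For (4), I would linearize near the origin, where $\ln(1+V_3)\approx V_3$ and, by Lemma~\ref{lm:h-function}, $h(G_i)\approx G_i$; the closed-loop ODE reduces to a Hurwitz matrix with eigenvalues $\lambda_2 p_1,\lambda_2 p_2$ governed by the polynomial $s^2+(c_1+c_2)s+\theta+c_1c_2$, while the IDE contributes the rates $\sigma_1,\sigma_2$, and absorbing the higher-order corrections of the quadratic approximations into an arbitrarily small $\varepsilon>0$ produces the stated decay rate.

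The main obstacle is the perturbation analysis of the second paragraph: the residuals $R_i$ carry the a priori unbounded factors $e^{\eta_i}$, and the entire architecture---taking $\ln(1+V_3)$ rather than $V_3$, the Lambert-$W$ bound of Lemma~\ref{lem:b-value}, the thresholds \eqref{eq:gamma-restrictions}, and the cut-offs defining $\mathcal{D}$---exists precisely to keep these contributions dominated by the decay of $G_i$. Matching each perturbation coefficient to the correct $B(\beta)$ and verifying that $\Omega_c\subseteq\mathcal{D}$ simultaneously enforces positivity of $u$ and the bounds on $\eta_1,\eta_2$ is the delicate, non-routine part; the ODE linearization and the $\mathcal{KL}$ argument are comparatively standard.
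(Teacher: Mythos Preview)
Your proposal is correct and follows essentially the same architecture as the paper's proof: compute $\dot V_3$ along the perturbed $(\hat\phi_1,\hat\phi_2)$-dynamics, pass to $\ln(1+V_3)$, bound the $\omega'/ (1+\omega)$-type ratios via Lemma~\ref{lem:b-value}, absorb the residual $e^{\eta_i}$ factors using the two $\eta_i$-ceilings in $\mathcal{D}$ and the choice \eqref{eq:gamma-restrictions}, and dominate the result by the $h(G_i)$ contributions through \eqref{eq:dini_g}; Claims 1--4 then follow exactly as you outline. The one noteworthy variation is your estimate $|e^{v_i}-1|=\big|\int_0^A\bar b_j\psi_i\big|\le G_i$, obtained directly from the definition of $v_i$, whereas the paper routes through $|v_i|\le G_i$ (from \cite{karafyllis2017stability}) to get $|e^{v_i}-1|\le e^{G_i}-1$; your bound is slightly sharper and avoids the external citation, but both lead to the same final inequality after pairing with $-\gamma_i(e^{G_i}-1)$.
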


\medskip

To illustrate this complex interconnection of multiple constraints of $\mathcal{D}$, Figure~\ref{fig:roas} shows level sets of $V_3(\eta)$ in the $\eta$-plane, namely, in the $\Pi_1(x_1)$-$\Pi_2(x_2)$-plane, together with the domain $\mathcal{D}$ in gray, i.e., the constraint $u>0$ for the ODE-system (\ref{eq:ode-system}) that also fulfills $\Pi_1(x_1)<\frac{\gamma_1}{2\lambda_1 B(1)}$ and $\Pi_2(x_2)<\frac{\gamma_2}{2\lambda_2 ( c_1 B(1) + c_1B(\theta))}$. Note that the set $\mathcal{D}$ allow arbitrary functions $(\psi_1,\psi_2)$ in the function space $\mathcal{S}$. This means that the set $\mathcal{D}$ is infinite dimensional and our planar plot in Figure~\ref{fig:roas} is only the $\psi_1\equiv \psi_2 \equiv 0 $ slice of $\mathcal{D}$. 

\smallskip

\begin{figure}[t]
    \centering
    \includegraphics{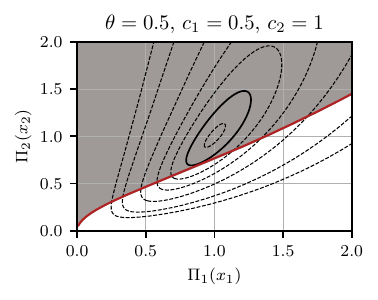}
    \caption{Level sets of $V_3$ (\mlLineLegendDashed{black}).
    The gray area is the set $\mathcal{D}$, namely, the set in which  $\Pi_1[x_1]\leq\ln\left(\frac{\gamma_1}{2\lambda_1 B(1)}\right)$, $\Pi_2[x_2]\leq \ln\left(\frac{\gamma_2}{2\lambda_2 (c_1 B(1) + c_1B(\theta))}\right)$, and $u>0$. The red curve corresponds to $u=0$.
    The largest level set of $V_3$ (\mlLineLegend{black}) is contained within the actual region of attraction of the origin.
    }
    \label{fig:roas}
\end{figure}

\begin{pf}[Proof of Theorem~\ref{thm:local-stability}]
    Note that the existence and uniqueness of the solution $\psi_{i,t} \in \mathcal{S}_i$ and the property
    \begin{align}
        \inf_{t\geq A} \psi_i(t) \geq \min_{t \in [-A, 0]} \psi_i(t) > -1 \ \forall t \geq 0\label{eq:psi_lower_bound}
    \end{align}
    is provided in Lemma~4.1 of \cite{karafyllis2017stability}. Further, the IC (\ref{eq:psi0}) of the IDEs is lower bounded as $\psi_{i,0}(a)>-1$ by definition.
    With $\bar{b}_i(a) \geq 0$, $\int_0^A \bar{b}_i(a)d a=1$ by definition (\ref{eq:definition-b-bar}), the map $v_i(\psi_{i,t})$ (\ref{eq:definition-v-map}) is well-defined and continuous. Hence, the ODE subsystem of 
    the closed loop, i.e. of system~\eqref{eq:transformed-system} with~\eqref{eq:control-law}, locally admits a unique solution.
    From \cite{karafyllis2017stability}, more precisely (A.43), it follows that 
    \begin{align}
    |v_i(\psi_{i,t})| \leq G_i(\psi_{i,t})\label{eq:v_leq_G},
    \end{align}
    where $v_i:\mathcal{S}\rightarrow \mathbb{R}_+$ is defined in (\ref{eq:definition-v-map}), holds. \new{For improved clarity, the proof of Theorem~\ref{thm:local-stability} is split into several parts.}
    
    \textbf{Claims 1 and 2: Lyapunov level sets.}
    Let's recall that the control law contains the functions $\phi_i$ and rewrite~\eqref{eq:control-law} for better readability as
    \begin{align}
    u
    =& u^* + \lambda_2 \left[ - c_2(e^{-z} - 1) - \theta c_1 (e^{c_1 \eta_1} -1)\right]\notag \\
    &+ \phi_1(\eta_1) - c_1  \phi_2(\eta_2)\reviewed{.} 
    \end{align}
    Following the same backstepping approach as in Section~\ref{sec:control-design} with $z = \eta_2 - c_1\eta_1, \ c_1>0$, the derivative of $V_3(\eta)$ is
    \begin{align}
         \dot V_3 
         =& -4\theta c_1 \lambda_2 \mu(-c_1 \eta_1) +\reviewed{c_2}\lambda_2 (e^z-1)(e^{-z}-1) \notag \\
        &+ (e^z-1)\big[ (\hat{\phi}_1 - \phi_1) \notag \\
        &- c_1(\hat{\phi}_2 - \phi_2)\big] - \theta c_1(e^{-c_1\eta_1}-1)(\hat{\phi}_2 - \phi_2)\reviewed{,}
    \end{align}
    whose first row corresponds to $\dot V_3$ of the ODE system~\eqref{eq:lyapunov-ode}, and which contains additional terms exponential in $\eta_2-c_1\eta_1$ and $-c_1\eta_1$.
    Note that these terms are simply $\frac{d\omega(ax)}{dx}=a \omega'(ax)$, leading to the next Lyapunov function
    \begin{align}
        V_4(\eta_1, z) = \ln(1 + V_3(\eta_1, z)) \label{eq:v4-def}.
    \end{align}
    The derivative of~\eqref{eq:v4-def} \new{is} 
    \begin{align}
    \dot{V}_4 =& 
    \frac{-4\theta c_1 \lambda_2 \mu(-c_1 \eta_1) - 4c_2 \lambda_2 \mu(z)}
    {1 + \theta \omega(-c_1 \eta_1) + \omega(z)} \notag \\
    +& \frac{\omega'(z)}{1 + \theta \omega(-c_1 \eta_1) + \omega(z)}
    \left[ (\hat{\phi}_1 - \phi_1) - c_1 (\hat{\phi}_2 - \phi_2) \right] \notag\\
    &- \frac{-c_1 \theta \omega'(-c_1 \eta_1)}
    {1 + \theta \omega(-c_1 \eta_1) + \omega(z)} (\hat{\phi}_2 - \phi_2) \label{eq:dot-v4-0}.
    \end{align}
    \new{By recognizing} that the parametrized functions $f(z;1,1)$ and $f(\eta_1;-c_1,\theta)$ from~\eqref{eq:f-function} \new{in Lemma~\ref{lem:b-value}}
    appear within~\eqref{eq:dot-v4-0}, the derivative of $\dot V_4$ is rewritten and bounded to 
    \begin{align}
    \dot{V}_4 
    \leq& \frac{-4\theta c_1 \lambda_2 \mu(-c_1 \eta_1) - 4c_2 \lambda_2 \mu(z)}
    {1 + \theta \omega(-c_1 \eta_1) + \omega(z)}
     \notag \\
    &+ \left( c_1 \left| \frac{\omega'(z)}{1 + \omega(z)}\right|
    + \left| \frac{-c_1 \theta \omega'(-c_1 \eta_1)}
    {1 + \theta \omega(-c_1 \eta_1)} \right| \right) \left|\hat{\phi}_2 - \phi_2\right| \notag \\
    &+ \left|\frac{\omega'(z)}{1 + \omega(z)} \right| \left|\hat{\phi}_1 - \phi_1\right|.\label{eq:dot-v4-1}
    \end{align}
    According to Lemma~\ref{lem:b-value}, there exist positive functions $B(1)$,  $B(\theta)$ such that
    \new{\begin{align}
    \max_{z \in \mathbb{R}} \left|\frac{\omega'(z)}{1 + \omega(z)} \right| &= B(1) \\ 
    \max_{\eta_1 \in \mathbb{R}} \left| \frac{-c_1 \theta \omega'(-c_1 \eta_1)}
    {1 + \theta \omega(-c_1 \eta_1)} \right| &=  c_1 B(\theta) 
    \end{align}}
    \new{Hence, considering that $\hat{\phi}_i - \phi_i = (\phi_i - \lambda_i) (e^{v_i}-1)$, 
    \begin{align}
        \left|\frac{\omega'(z)}{1 + \omega(z)} \right| \left|\hat{\phi}_1 - \phi_1\right|  \leq  B(1)|\phi_1 - \lambda_1| |e^{v_1}-1| \label{eq:intermediate-bound-1}, 
    \end{align}
    and
    \begin{align}
         \left( c_1 \left| \frac{\omega'(z)}{1 + \omega(z)}\right|
        + \left| \frac{-c_1 \theta \omega'(-c_1 \eta_1)}
        {1 + \theta \omega(-c_1 \eta_1)} \right| \right) \left|\hat{\phi}_2 - \phi_2\right| \notag\\
        \leq  \left(c_1 B(1) + c_1B(\theta)|\right)|\phi_2 - \lambda_2| |e^{v_2}-1| \label{eq:intermediate-bound-2}
    \end{align}
    hold, and subsequently $\dot V_4$ can be bounded to
    \begin{align}
        \dot V_4 \leq& \frac{-4\theta c_1 \lambda_2 \mu(-c_1 \eta_1) - 4c_2 \lambda_2 \mu(z)}{1 + \theta \omega(-c_1 \eta_1) + \omega(z)}\notag \\
    &+ B(1)|\phi_1 - \lambda_1| |e^{v_1}-1|\notag \\
    &+ \left(c_1 B(1) + c_1B(\theta)|\right)|\phi_2 - \lambda_2| |e^{v_2}-1|.\label{eq:dot-v4-final}
    \end{align}}%
    \new{The overall Lyapunov functional $V(\eta,\psi)$ as defined in~\eqref{eq:lyapunov-ode-ide} is the sum of $V_4(\eta_1,z)$ and the two terms depending on the functionals $G_i(\psi_i)$. The Dini derivatives of said terms are 
    \begin{align}
        D^+&\left( \frac{\gamma_i }{\sigma_i} h(G_i(\psi_{i})) \right) \notag \\
        &= \frac{\gamma_i }{\sigma_i} \frac{1}{G_i(\psi_i)}\left(e^{G_i(\psi_i)} - 1\right) D^+\left(G_i(\psi_i)\right) \notag\\
        &\leq -\gamma_i \left(e^{G_i(\psi_i)} - 1\right).\label{eq:intermediate-bound-3}
    \end{align}}%
    \new{Taking \eqref{eq:dot-v4-final} and \eqref{eq:intermediate-bound-3} together,}
    the Dini derivative of \new{the overall Lyapunov functional} $V(\eta,\psi)$~\eqref{eq:lyapunov-ode-ide}, results in
	\begin{align}
    D^+ V \leq& 
    \frac{-4\theta c_1 \lambda_2 \mu(-c_1 \eta_1) - 4c_2 \lambda_2 \mu(z)}{1 + \theta \omega(-c_1 \eta_1) + \omega(z)} - \gamma_1 (e^{G_1} - 1) \notag \\
    &+ B(1)|\phi_1 - \lambda_1| |e^{v_1}-1| - \gamma_2 (e^{G_2} - 1) \notag \\
    &+ \left(c_1 B(1) + c_1B(\theta)|\right)|\phi_2 - \lambda_2| |e^{v_2}-1| \\
    \leq&  \frac{-4\theta c_1 \lambda_2 \mu(-c_1 \eta_1) - 4 c_2 \lambda_2 \mu(z)}{1 + \theta \omega(-c_1 \eta_1) + \omega(z)} \notag \\
    & + \left[ \left(c_1 B(1) + c_1B(\theta)\right)|\phi_2 - \lambda_2| - \gamma_2\right] (e^{G_2} - 1)  \notag \\
    &+ \left[B(1) |\phi_1 - \lambda_1| - \gamma_1 \right](e^{G_1} - 1) \label{eq:dini-almost-final}\reviewed{,}
    \end{align}
    where \new{the key point lies in bounding $v_i$ through $G_i$ using the fundamental property~\eqref{eq:v_leq_G}}. For notational convenience, the functions $v_i$ and $G_i$ are typeset without the time argument. 
    These expressions cannot be bounded globally.
    For \reviewed{$D^+ V$} to be negative definite at least in a region of state space around the origin, the states are restricted to
    \begin{subequations}
    \begin{align}
        \eta_1 &\leq \ln\left(\frac{\gamma_1}{2\lambda_1 B(1)}\right) =: H_1,\\
        \eta_2 &\leq \ln\left(\frac{\gamma_2}{2\lambda_2 c_1( B(1) + B(\theta))}\right) =: H_2,
    \end{align}\label{eq:eta-restrictions}
    \end{subequations}
    \new{such that
    \begin{align}
        \left[B(1) |\phi_1 - \lambda_1| - \gamma_1\right] &\leq -\frac{\gamma_1}{2}, \\
         \left[\left(c_1 B(1) + c_1B(\theta)\right)|\phi_2 - \lambda_2| - \gamma_2\right] &\leq -\frac{\gamma_2}{2},
    \end{align}
    hold.}
    For $H_1$, $H_2$ to be positive, choose 
        \begin{align}
        \gamma_1 > 2\lambda_1 B(1),\quad
        \gamma_2 > 2\lambda_2 c_1(B(1) + B(\theta)).\label{eq:gamma-temp}
        \end{align}
    \mycomment{After imposing \eqref{eq:gamma-temp}, \eqref{eq:dini-almost-final} becomes
    \begin{align}
        D^+ V \leq& - \frac{4\theta c_1 \lambda_2 \mu(-c_1 \eta_1) - 4c_2 \lambda_2 \mu(z)}{1 + \theta \omega(-c_1 \eta_1) + \omega(z)}
         \notag \\
         &-\frac{\gamma_1}{2}
         \left(e^{G_1}-1\right)
            -\frac{\gamma_2}{2}
            \left(e^{G_2}-1\right) \notag \\
            &\forall \ \eta_1<H_1, \ \eta_2<H_2. \label{eq:Vdot-final}
    \end{align}}
    Imposing \eqref{eq:gamma-temp}, equations \eqref{eq:lyapunov-ode-ide}, \eqref{eq:dini-almost-final} are rewritten as
    \begin{align}
    \label{eq:Vdot-final+}
    D^+ V(\eta,G) 
    &\leq - \frac{4\theta c_1 \lambda_2 \mu(-c_1 \eta_1) - 4c_2 \lambda_2 \mu(z)}{1 + \theta \omega(-c_1 \eta_1) + \omega(z)}\notag \\
         &\quad \ -\frac{\gamma_1}{2}
         \left(e^{G_1}-1\right)
            -\frac{\gamma_2}{2}
            \left(e^{G_2}-1\right) 
            \nonumber\\
            & =: - W(\eta,G), \qquad G=(G_1,G_2)
    \end{align}
    for all $\eta_1<H_1,  \eta_2<H_2$. 
    Since $V$ is radially unbounded, all level sets $\Omega_c:=\{\eta \in \mathbb{R}^2, \psi \in \mathcal{S} | V(\eta, \psi)\leq c\}$ are compact. Hence, there exists a level set $c>0$ such that $\Omega_c \subset \mathcal{D}$, where $\mathcal{D}$ is defined by (\ref{eq:constraints}). 

\textbf{Claim 3: Lyapunov $\mathcal{\mathcal{KL}}$ estimates.}
 The functions $V_4$ and $W$ are positive definite in their arguments $(\eta,G)$.
 This is so because $h$ is a positive definite function in its arguments $G_i$ in $V$, functions $e^{G_i}-1$ are positive definite in $G_i$, $\ln(1+V_3)$ is positive definite in $V_3$, the function $V_3(\eta)$ is positive definite in $\eta$ based on the definitions \eqref{eq-omega-mu}, and so is $\frac{4\theta c_1 \lambda_2 \mu(-c_1 \eta_1) - 4c_2 \lambda_2 \mu(z)}{1 + \theta \omega(-c_1 \eta_1)}$ in $W$.
As in \cite{karafyllis2017stability}, it is then concluded that there exists $\theta_0\in\mathcal{KL}$ such that, for all initial conditions $(\eta_0, G_0)$ within the largest $\Omega_c$ contained in $\mathcal{D}$, the estimate \eqref{eq:KL-estimate} holds. 

\new{\textbf{Claim 4: Exponential stability.}
Finally, note that \eqref{eq:lyapunov-ode} is locally quadratic in $\eta$, and that for all $r\in(0,1)$ and for all $\psi_i\in\mathcal{S}$ the property $\|\psi_i\|_\infty \leq r$ holds. Using the definition of functional $G_i$ \eqref{eq:g-functional}, it holds that
\begin{equation}
\|\psi_i\|_\infty \leq G_i(\psi_i) \leq \frac{{\rm e}^{\sigma_i A}}{1-r} \|\psi_i\|_\infty \,,
\end{equation}
and the local asymptotic stability of $\eta=0$,  $\psi=0$ in the norm $\sqrt{\eta_1^2+\eta_2^2} + \|\psi_1\|_\infty+ \|\psi_2\|_\infty$ follows. The exponential nature of asymptotic stability is noted by  examining the Lyapunov estimates and by observing that both $V$ and $W$ are locally quadratic in $\sqrt{\eta_1^2+\eta_2^2} + \|\psi_1\|_\infty+ \|\psi_2\|_\infty$. The locally quadratic dependency of $V$ is the result of the following chain of dependencies: the locally quadratic dependency of $V_3$ on $\eta$, the locally quadratic dependency of $V$ on  $G_i$ through $h$, and the locally linear dependency of $G_i$ on $\|\psi_i\|_\infty$. For similar reasons $W$ is also locally quadratic in $\eta$ and $\|\psi_i\|_\infty$ through the locally linear dependence of $G_i$ on $\|\psi_i\|_\infty$. }

\new{To estimate the local exponential decay rate, the linearization of the plant model \eqref{eq:transformed-system} is considered, along with the linearization of the controller \eqref{eq:control-law}, which gives
\rev{
\begin{subequations}
\begin{align}
    \label{eq:doteta_1_linearized}
    \dot\eta_1 &= - \lambda_2(\eta_2+v_2)\\
    \label{eq:doteta_2_linearized}
    \dot \eta_2 &= - \lambda_2\left[(\theta c_1^2 +c_1 c_2)\eta_1 - (c_1+c_2)\eta_2\right]-\lambda_1 v_1\,.
\end{align} \label{eq:doteta_linearized}
\end{subequations}}%
This linear ODE is input-to-state stable (ISS) with respect to the input $(v_1,v_2)$. The $\mathcal{KL}$ component of this linear ODE's ISS bound has the exponential decay rate corresponding to the eigenvalues of the matrix 
$A_\eta =\left[\begin{array}{cc} 0 & -\lambda_2 \\ \rev{\lambda_2(\theta c_1^2 + c_1 c_2)} & -\lambda_2 (c_1+c_2) \end{array}\right]$, which are $\lambda_2 p_1, \lambda_2 p_2$.  Likewise, the nonlinear ODE feedback system \eqref{eq:transformed-system}, \eqref{eq:control-law} is locally ISS with respect to $(v_1,v_2)$, with decay rates reduced by $1+\varepsilon_0$ relative to the linear model, where $\varepsilon_0>0$ is arbitrarily small and this reduced decay rate follows from a Lyapunov argument with a Lyapunov matrix corresponding to $A_\eta$. 
From \eqref{eq:definition-v-map} and \eqref{eq:psi-decays}, we get the ISS relationship $|v_i(t)|\leq N_i ||\psi_{i,t}||_{\infty}\leq N_i M_i e^{-\sigma_i t}||\psi_{i,0}||_{\infty}$, where $N_i>0$ and the exponential decay rates are $\sigma_1,\sigma_2$. Combining the local exponential ISS bound for $(\eta_1,\eta_2)$ with the exponential output stability bound for $(v_1,v_2)$, the decay estimate stated in the theorem results.}
\hfill $\Box$
\end{pf}
\smallskip

\new{On the first encounter of control of age-structured population dynamics, readers and researchers alike often expect that the unmeasured age-dependent functional states $\psi_{i,t}$ need to be estimated by an observer, in order for their estimate to be substituted into the full-state feedback \eqref{eq-controller-original-variables}, instead of the unmeasurable full state $x_i$. But, the $\psi_{i,t}$-dynamics are not observable from any lumped measurements $y_i(t) = \int_0^A q_i(a) x_i(a,t)da$ with positive output kernel functions $q_i$.} 
\rev{However, a more realistically implementable feedback control law is obtained by replacing $\Pi[x_i](t)$ in the controller given in \eqref{eq:control-law} with the $y_i(t)/y_i^*$ where $y_i^* = \int_0^A q_i(a) x_i^*(a) da$. This new control law  renders the equilibrium of the linearized system  exponentially stable.}
\rev{
    \begin{proposition}
        \label{prop:local-exponential-stability}
        The static output-feedback law
\begin{multline}
\label{eq-static-out-fbk}
u = u^* + \lambda_2\left[c_2\left(1 - \frac{y_1(t)^{c_1} y_2^*}{y_2(t) (y_1^*)^{c_1}}\right) + \frac{\lambda_1}{\lambda_2}\left(1 - \frac{y_1(t)}{y_1^*}\right) \right. \\ \left. + \theta c_1 \left(1 - \left(\frac{y_1(t)}{y_1^*}\right)^{c_1}\right) + c_1 \left(\frac{y_2(t)}{y_2^*} - 1\right)\right]
\end{multline}
renders the {\em linearization} of the closed-loop system with the plant model  \eqref{eq:transformed-system}  exponentially stable in the norm $\sqrt{\eta_1^2+\eta_2^2} + \|\psi_1\|_\infty+ \|\psi_2\|_\infty$.
\end{proposition}
    }

\rev{
\begin{pf}
Substituting the solution \eqref{eq:solution-x} into the lumped measurement results in the relations
\begin{equation}
\frac{y_i(t)}{y_i} = \Pi_i[x_i](t) \left(1 + \delta_i(t)\right)
\end{equation}
where
\begin{equation}
\delta_i(t) = \int_0^A \frac{q_i(a) x_i^*(a)}{y_i^*}\psi(t - a) da
\end{equation}
and $\vert \delta_i(t) \vert \leq A \Vert \psi_i(t) \Vert_{\infty}$. Linearizing the system defined in \eqref{eq:doteta1}, \eqref{eq:doteta2}, in closed loop with control \eqref{eq-static-out-fbk}, around the  equilibrium $\eta_1=\eta_2=0, \psi_1\equiv\psi_2\equiv 0$, results in
\begin{subequations}
\label{eq-eta-lin-out-fbk}
\begin{align} 
\dot{\eta}_1 &= -\lambda_2 \left(\eta_2 + \int_0^A \bar{b}_1(a) \psi_2(t-a) da \right)\\
\dot{\eta}_2 &= \lambda_2[(\theta c_1^2 + c_1 c_2)\eta_1 - (c_1 + c_2)\eta_2] \notag \\ &\quad- \lambda_1 \int_0^A \bar{b}_2(a) \psi_1(t-a) da  \notag \\ &\quad+ (\lambda_1 + \lambda_2 (\theta c_1^2 + c_1 c_2)) \delta_1 + \lambda_2(c_1 + c_2) \delta_2
\end{align}
\end{subequations}
For $ \psi_1\equiv\psi_2\equiv 0$, the homogeneous part of this system is governed by the Hurwitz matrix $A_\eta$, as in \eqref{eq:doteta_linearized}, meaning that \eqref{eq-eta-lin-out-fbk} is exponentially ISS with respect to the input $(\psi_1,\psi_2)$. 
By mimicking the proof of Lemma C.4 in \cite{KKK} with $x_1$ replaced by $(\eta_1,\eta_2)$ in \cite[(C.15)]{KKK}, $x_2$ replaced by $(\psi_1,\psi_2)$ in \cite[(C.16)]{KKK}, $u$ set to zero in \cite[(C.15), (C.16)]{KKK}, and the exponential estimate \eqref{eq:psi-decays} used in lieu of the $\mathcal{KL}$ estimate \cite[(C.16)]{KKK}, we arrive at an exponential estimate in the norm $\sqrt{\eta_1^2+\eta_2^2} + \|\psi_1\|_\infty+ \|\psi_2\|_\infty$ in the same manner as the $\mathcal{KL}$ estimate \cite[(C.17)]{KKK} is obtained. 
\hfill$\Box$\end{pf}
}
\rev{The consequence of Proposition~\ref{prop:local-exponential-stability} is that, as long as the output kernel functions $q_i(a)$ and the equilibrium profiles $x_i^\ast(a)$ are known, an observer is not needed (nor is it usable, due to the lack of observability of $\psi_i$). The substitution of $\Pi_i(x_i)$ by $y_i(t)/y_i^*$ of course affects the estimate \eqref{eq:Omegac},
\eqref{eq:constraints} of the region of attraction but the linearization is exponentially stable.
}

\rev{
Proposition~\ref{prop:local-exponential-stability} establishes the exponential stability of the closed-loop system in the $(\eta_1,\eta_2,\psi_1,\psi_2)$ representation. But how about the linearization of the closed-loop system  
\eqref{eq:sys}, \eqref{eq-static-out-fbk} in the original $(x_1,x_2)$ variables, at the equilibrium $(x_1^*, x_2^*)$? One can establish the exponential stability (e.s.) of the linearization using, for the left side of the e.s. estimate, the relation \eqref{eq:solution-x} and approximating it as $x_i(a,t)-x_i^*(a) \approx x_i^*(0)(\eta_i(t)+\psi_i(t-a))$, and, for the right side of the e.s. estimate, the relations \eqref{eq:system-trafo} linearly approximated in terms of the initial conditions $x_i(a,0)-x_i^*(a)$. However, the latter step leads to complicated expressions on the e.s. estimate's right-hand side, and we forego the development of the claim of the e.s. of the linearization of 
\eqref{eq:sys}, \eqref{eq-static-out-fbk}. 
}


\begin{figure}
    \centering
    \includegraphics{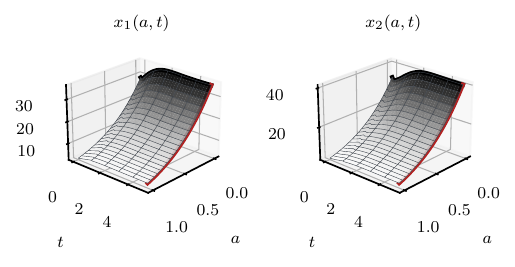}
    \includegraphics{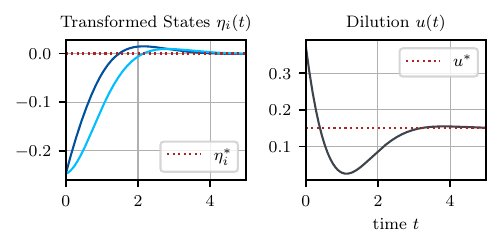}
    \caption{Simulation results of the population densities $x_i$ from system (\ref{eq:sys}) and the transformed state variables $\eta_1$ (\mlLineLegend{dunkelblau}) and $\eta_2$ (\mlLineLegend{hellblau})
    }
    \label{fig:simulation-results}
\end{figure}

\section{Simulations}\label{sec:simulations}
The age-dependent kernels for simulation are 
\begin{align}
	\mu_i(a)=\bar{\mu}_i\mathrm{e}^a, \  k_i(a)=\bar{k}_i\mathrm{e}^{-a}, \ 
	b_i(a)=\bar{b}_i\left(a-a^2\right), \label{eq:parameters}
\end{align}
which are biologically inspired \new{from piror studies on E. coli populations in chemostats} \cite{kurth2023control}. 
In this example, the maximum age is set to $A=1$ and both species to exhibit the same behavior, namely $\bar{\mu}_i = 0.5$, $\bar{k}_i = 3$, $\bar{b}_i =0.4$.
To demonstrate the stabilization capacity of the designed feedback law, the challenging case of ICs when both populations are slightly underpopulated are chosen, namely
\begin{align}
    x_1(0) = x_1^*(a) e^{-0.2(1+a)} , \ x_2(0) = x_2^*(a) e^{-0.2(1+a)} \label{eq:ic-1}
\end{align}
such that $\eta_0=[-0.25, -0.25]^\top$.
Figure~\ref{fig:simulation-results} shows simulations of system~(\ref{eq:sys}) with control (\ref{eq:control-law}), parameter set (\ref{eq:parameters}), ICs (\ref{eq:ic-1}).
The parameters of the control law (\ref{eq:control-law}) are chosen to be $\theta=c_1=c_2=1$.
The control achieves convergence to the desired equilibrium while remaining positive.

\section{Conclusion}\label{sec:conclusion}
\rev{Building on foundational work \cite{karafyllis2017stability}, this paper addresses exponentially unstable predator-predator dynamics, achieving local exponential stabilization via backstepping. Future work will explore extending these methods to general 
$n$-species models with diverse competition types, opening promising directions for applications in epidemiology and ecosystem stability, including food webs with multiple predators, preys, and complex interactions.}

\begin{ack}                     
This work was partially supported by the German Research Foundation (Deutsche Forschungsgemeinschaft) under grant SA 847/22-2, project number 327834553.
\end{ack}

\bibliographystyle{plain}        
\bibliography{main}           

\end{document}